\documentclass[journal]{IEEEtran}
\usepackage{amsthm}
\usepackage{amssymb}
\usepackage{graphicx}
\usepackage{amsmath}
\usepackage{bm}
{}
{}
\newtheorem{lemma}{Lemma}{}
{}
{}

\usepackage{algorithm}
\usepackage{algpseudocode}
\usepackage{graphics}
\usepackage{epsfig}
\usepackage{multirow}
\usepackage{array}
\usepackage{cite}
\usepackage{stfloats}
\usepackage{subfigure}
\usepackage{booktabs}
\usepackage{color}
\usepackage{tabu} 
\usepackage{stfloats}
\usepackage{mathtools}
\usepackage{stmaryrd} 

\DeclareMathOperator*{\minimize}{minimize}

\begin{document}
\title{Target Sensing in Full-Dimensional MIMO-OFDM ISAC Systems Via Coupled Tensor Decomposition}

\author{\IEEEauthorblockN{
	Ruoyu Zhang, Tingjie Yang, Tianyu Zhao, Lei Cheng, Shuai Wang, Wen Wu, \\ 
    Hüseyin Arslan, \IEEEmembership{Fellow,~IEEE}}
	\thanks{
		R. Zhang, T. Yang, and W. Wu are with the Key Laboratory of Near-Range RF Sensing ICs \& Microsystems (NJUST), Ministry of Education, School of Electronic and Optical Engineering, Nanjing University of Science and Technology, Nanjing 210094, China (e-mail: ryzhang19@njust.edu.cn; 125104022810@njust.edu.cn; wuwen@njust.edu.cn).
        T. Zhao is with School of Communication and Information Engineering, School of Integrated Circuits, Nanjing Institute of Technology, Nanjing 211167, China (e-mail: zhaotianyu@njit.edu.cn).
        L. Cheng is with the College of Information Science and Electronic Engineering, Zhejiang University, Hangzhou 310027, China (e-mail: lei\_cheng@zju.edu.cn).  
        S. Wang is with the Shenzhen Institutes of Advanced Technology, Chinese Academy of Sciences, Shenzhen 518055, China (e-mail: s.wang@siat.ac.cn).
        Hüseyin Arslan is with the Department of Electrical and Electronics Engineering, Istanbul Medipol University, 34810 Istanbul, Türkiye (e-mail: huseyinarslan@medipol.edu.tr).
}
}


\maketitle



%
\IEEEpeerreviewmaketitle


\begin{abstract}
Integrated sensing and communication (ISAC) has been recognized as a key enabler for future wireless communication systems. 
However, existing ISAC designs based on MIMO-OFDM generally fail to fully exploit the spatial degrees of freedom and the communication data payload, resulting in restricted target sensing performance. 
To cope with this issue, in this paper we propose a novel target sensing method for full-dimensional (FD) MIMO-OFDM ISAC systems based on coupled tensor decomposition. 
We first characterize the system model of FD MIMO-OFDM ISAC, where the base station employs a large number of antennas in two dimensions and facilitates both sensing and communication via the data payload of OFDM signals.
Then, we formulate the problem of estimating target parameters as a coupled canonical polyadic decomposition (CPD) problem, which jointly leverages the echo signals of communication data payload over multiple subcarriers in the frequency domain.
The theoretical analysis indicates that the formulated coupled CPD problem possesses a more relaxed uniqueness condition compared with the ordinary CPD problem, thereby enabling more stable target sensing performance through the pure communication data payload in FD MIMO-OFDM ISAC.
In addition, we develop a coupled CPD-based target sensing algorithm, where the target parameters, including azimuth and elevation angles, ranges, and velocities, are extracted from the decomposed factor matrices.
Numerical results show that the proposed target sensing method achieves a higher success rate and more accurate parameter estimation than existing tensor decomposition methods.
\end{abstract}

\begin{IEEEkeywords}
Full-dimensional MIMO, integrated sensing and communication, OFDM signals, target parameter estimation, coupled tensor decomposition.
\end{IEEEkeywords}

\section{Introduction}
Integrated sensing and communication (ISAC) has been recognized as a pivotal technology for sixth-generation (6G) wireless networks, driven by the escalating demands of emerging intelligent applications such as autonomous driving, low-altitude economy, and smart cities 
\cite{Liu2022Integrated,Zhang2025Terahertz,Luo2025ISAC}. 
Unlike conventional designs that treat sensing and communication as isolated functions, ISAC aims to unify these two fundamental capabilities into a single system, thereby substantially enhancing spectral utilization as well as hardware and energy efficiency \cite{Zhou2022Integrated}. 
Attracted by these benefits, researchers from both academia and industry have recently been exploring potential ISAC implementations for forthcoming 6G wireless networks\cite{Ali2020Leveraging,Zhang2021Perceptive}.

Orthogonal frequency division multiplexing (OFDM) 
has been recognized as a highly attractive ISAC waveform, due to its potential to provide excellent sensing and communication performance as well as backward compatibility. From a communication perspective, OFDM is widely employed in modern wireless standards, such as 5G NR and IEEE 802.11 
\cite{Yli2021Frequency,Rahbari2017Exploiting}, 
due to its high spectral efficiency and implementation flexibility. 
From a sensing perspective, the multi-carrier structure of OFDM ensures the independence and orthogonality of spectral components, enabling decoupled estimation of target range and velocity in separate dimensions
\cite{Sturm2011Waveform,Liyanaarachchi2021Optimized,Zhang2022Integrated,Zhang2025Target}.
Meanwhile, multiple-input multiple-output (MIMO), along with its advanced evolutions such as massive MIMO and full-dimensional (FD) MIMO \cite{larsson2014massive,Stoica2009MIMO,Yu2020Performance,Kim2014Full}, 
can provide additional spatial degrees of freedom (DoFs) to facilitate beamforming gains for enhanced coverage, spatial multiplexing for high-capacity communications, and waveform diversity for improved sensing resolution and parameter estimation performance.
Consequently, benefiting from the spatial DoFs and the endogenous communication and sensing capabilities, 
the synergy of MIMO and OFDM (MIMO-OFDM) is regarded as one of the most promising approaches for ISAC deployment in practical wireless networks.

The crucial challenge for MIMO-OFDM ISAC lies in how to effectively achieve target sensing while satisfying the performance requirements for wireless communications. 
To achieve this goal, various ISAC waveform and beamforming designs have been proposed based on different communication and sensing performance metrics \cite{Liu2018Toward,liuxiang2020joint,Liu2022Cramér,Hua2023Optimal,Li2025MIMO}. 
The authors in \cite{Liu2018Toward} developed a globally optimal waveform solution for the omnidirectional and directional beampattern design problems in MIMO ISAC systems. 
Subsequently, the typical sensing performance metrics, such as the mean squared error (MSE) between the actual and desired beampattern \cite{liuxiang2020joint}, the Cramér-Rao bound (CRB) \cite{Liu2022Cramér}, the beampattern gain \cite{Hua2023Optimal}, and the integrated sidelobe level \cite{Li2025MIMO}, are taken into account for joint beamforming design in MIMO-OFDM ISAC systems. 
Meanwhile, common communication metrics including the achievable rate \cite{chen2022generalized}, 
the mutual information \cite{Wei2024Waveform,Chen2026Movable},
and energy efficiency \cite{Zou2024Energy} are also utilized for ISAC design in MIMO systems. 
These ISAC works have integrated target sensing and communication functionalities into one platform, but generally assume partially known target-parameter information, such as angles, ranges, and velocities.

To actually estimate the unknown target parameters, a series of super-resolution and high-accuracy target sensing methods have also been proposed over the past years, such as multiple signal classification (MUSIC) \cite{Stoica1989MUSIC}, and compressive sensing (CS)-based methods \cite{yang2013off,Zhang2022DOA}. 
These existing techniques can be readily applied and extended to MIMO-OFDM ISAC systems \cite{keskin2021mimo,ercan2025ris,Ni2021Uplink,Xiao2024A,Liu2020Joint,Gao2023ISACmassiveMIMO,rahman2019framework,Shafin2016DoA}. 
For example, 
the authors in \cite{keskin2021mimo} proposed to estimate the delay, Doppler, and angle parameters of multiple targets via the orthogonal matching pursuit (OMP) procedure. 
Also, the authors in \cite{ercan2025ris} investigated the reconfigurable intelligent surface (RIS)-aided monostatic multi-target sensing problem with angle-Doppler coupling and developed an OMP-based method with successive target detection and interference cancellation.
The authors in \cite{Ni2021Uplink} proposed a high-resolution direction-of-arrival (DOA) estimation algorithm for uplink perceptive mobile networks.
As a further step, the authors in \cite{Xiao2024A} proposed a joint angle-range-velocity estimation algorithm for MIMO-OFDM ISAC systems. 
For downlink sensing, the authors in \cite{Liu2020Joint} proposed to leverage omnidirectional pilot signals to actively sense the targets, where typical MUSIC and matched-filtering algorithms are employed. 
Gao et al. \cite{Gao2023ISACmassiveMIMO} proposed a CS-based target parameter estimation algorithm for millimeter-wave (mmWave) MIMO-OFDM ISAC systems.
To further exploit the spatial DoFs, the authors in \cite{Shafin2016DoA} proposed a DOA estimation method for FD MIMO-OFDM systems, where ergodic capacity analysis accounting for DOA estimation errors is also conducted.
However, both MUSIC and CS algorithms neglect the inherent high-dimensional characteristics in the received signals, which results in unsatisfactory estimation performance and high computational complexity.

Recently, tensor-based signal processing has attracted considerable attention due to its ability to exploit multidimensional data structures, and has been widely applied in target sensing and communication systems, such as DOA estimation, target parameter estimation, and communication channel estimation \cite{Cheng2015Subspace,Shi2021Nested,Yao2026Tensor,Wu2026Enhanced,zhou2017low,Zhang2022Tensor,Zhang2024Channel}.
For instance, 
the authors in \cite{Shi2021Nested} addressed the joint direction of departure and DOA estimation problem with bistatic MIMO radar using tensor decomposition. 
In the context of communication systems, the authors in \cite{qian2019algebraic} developed a canonical polyadic decomposition (CPD)-based channel estimation scheme for frequency division duplex massive MIMO systems, 
while the authors in \cite{Zhou2019FD} developed an improved channel estimation scheme by leveraging the intrinsic tensor feature in FD MIMO channels. 
In addition, Asim et al. \cite{Asim2025Structured} developed a structured tensor-based channel estimation method for RIS-assisted THz communications by exploiting the inherent geometric structure of the cascaded channel.
In \cite{Asim2025Two}, the authors further proposed a two-dimensional channel parameter estimation scheme for intelligent reflection surface-assisted networks by combining pilot decoupling with tensor-structured rank-one approximations.
Building upon these advances, we further established a unified tensor framework for channel estimation and target parameter estimation problems in massive MIMO ISAC systems \cite{Zhang2024TWC} \cite{Zhang2024IOTJChannel}, where both problems are formulated within a CPD model, enabling efficient estimation of angular, delay, and Doppler parameters.
The aforementioned studies predominantly adopt a tensor modeling approach to perform target sensing based on either specialized radar waveforms or communication pilot signals, overlooking the potential use of data payload signals for sensing purposes. 
In practice, data payload signals constitute the majority of transmitted signals in MIMO-OFDM ISAC systems \cite{Lu2025Sensing}, yet their exploitation for target sensing remains underexplored. 
Recent studies have begun to examine the use of communication data payload signals for sensing tasks. 
Particularly, the authors in \cite{Liu2025CPOFDM} demonstrated that OFDM data payload signals can achieve the lowest ranging sidelobe levels under quadrature amplitude modulation (QAM) and phase-shift keying. Meanwhile, the authors in \cite{Lu2024Random} analyzed sensing performance under random signaling and proposed a precoding design method for multi-antenna ISAC systems.
However, these works primarily focus on single-antenna OFDM ISAC or MIMO ISAC precoding design, without fully exploiting the multidimensional structure of the received echoes.
How to effectively harness the echo signals of data payload for target sensing and accurately extract unknown target parameters within a tensor framework remains a challenging and open problem.

In this paper, we address the target sensing problem for full-dimensional MIMO-OFDM ISAC systems within a coupled tensor decomposition framework, where the target parameters are fully extracted from the echo signals of the communication data payload. The main contributions of the paper are summarized as follows: 

\begin{itemize}
	\item 
	We make the first attempt to characterize the system model of FD MIMO-OFDM ISAC, where the base station employs a large number of antennas in two dimensions and facilitates both sensing and communication via the data payload of OFDM signals. The same time-frequency resources of OFDM signals are shared for data transmission and are simultaneously reused for sensing, such that target sensing can be enabled without compromising communication functionality.

	\item 
	By exploiting the received echo signals of the communication data payload across multiple subcarriers, we propose to formulate the target sensing problem as a coupled CPD (CCPD) problem. On this basis, we propose a two-stage coupled CPD-based target sensing algorithm, such that the target parameters, including azimuth and elevation angles, time delay, Doppler shift, and reflection coefficients, can be separately extracted from the decomposed factor matrices.

	\item 
	We analyze the uniqueness condition of the proposed coupled CPD formulation for the target sensing problem in FD MIMO-OFDM ISAC. Different from the ordinary CPD formulation, where only one subcarrier can be exploited for target sensing, the proposed coupled CPD formulation possesses a more relaxed uniqueness condition, thereby enabling more stable target-parameter acquisition performance through the pure communication data payload.

	\item 
	Via extensive numerical simulations, we show that compared with the ordinary CPD-based and existing MUSIC-based methods, the proposed coupled CPD-based method can achieve a higher target sensing success rate. Moreover, more accurate target parameter estimation demonstrates the superiority of the proposed target sensing method in terms of communication spectral efficiency.

\end{itemize}

We would like to emphasize the key differences between the current work and our prior works \cite{Zhang2024TWC,Zhang2024IOTJChannel}. Specifically, the works in \cite{Zhang2024TWC,Zhang2024IOTJChannel} considered target sensing during the channel estimation phase, where the sensing functionality entirely relied on predefined pilot sequences and the received echo signal could be naturally formulated as a CPD tensor model. 
In contrast, in this paper we attempt to advance ISAC by performing sensing during the data transmission phase within an FD MIMO-OFDM architecture equipped with planar arrays. By simultaneously reusing communication data payloads across multiple OFDM subcarriers, target sensing is seamlessly enabled without compromising communication functionality. This target sensing problem necessitates the development of a novel CCPD framework, which possesses unique structures that have not been analyzed or exploited in \cite{Zhang2024TWC} and \cite{Zhang2024IOTJChannel}.

Notations:  
Vectors and matrices are denoted by lowercase and uppercase boldface letters, respectively. 
$\left( \cdot \right)^{T}$, 
$\left( \cdot \right)^{*}$, 
$\left( \cdot \right)^{H}$, 
and $\left( \cdot \right)^{-1}$
represent the transpose, conjugate, conjugate transpose, and Moore-Penrose pseudo-inverse operations, respectively.
$\mathbf{I}_{M}$ is the $M \times M$ identity matrix and $\mathrm{j} = \sqrt{-1}$ denotes the imaginary unit. 
$\left\| \cdot \right\|_{2}$ and $\left\| \cdot \right\|_{F}$ denote the $l_{2}$ norm and Frobenius norm, respectively. 
$\odot$, $\otimes$, and $\circ$ denote the Khatri-Rao product, Kronecker product, and outer product, respectively. 
$\mathcal{CN}(0, \sigma^2)$ denotes the zero-mean circularly symmetric complex Gaussian distribution with variance $\sigma^2$. 
$\mathbb{E}\{\cdot\}$ and $\mathrm{Tr}\{\cdot\}$ denote the expectation and trace operations, respectively. 
$[\mathbf{a}]_{m}$ and $[\mathbf{A}]_{m,n}$ denote the $m$-th entry and $(m,n)$-th entry of $\mathbf{a}$ and $\mathbf{A}$, respectively. 
$\mathrm{diag}(\mathbf{a})$ denotes a diagonal matrix formed by $\mathbf{a}$. 
The $j$-th compound matrix of $\mathbf{A}\in\mathbb{C}^{I\times R}$ is denoted by $\mathrm{C}_{j}(\mathbf{A})\in\mathbb{C}^{\mathrm{C}_{I}^{j}\times \mathrm{C}_{R}^{j}}$, where $\mathrm{C}_{m}^{j}=\frac{m!}{j!(m-j)!}$. It is the matrix containing the determinants of all $j\times j$ submatrices of $\mathbf{A}$, arranged with the submatrix index sets in lexicographic order. 
The Khatri-Rao product of two matrices $\mathbf{A} \in \mathbb{C}^{I \times K}$ and $\mathbf{B} \in \mathbb{C}^{J \times K}$ is denoted by $\mathbf{A} \odot \mathbf{B} \in \mathbb{C}^{IJ \times K}$, defined as $\mathbf{A} \odot \mathbf{B} = [\mathbf{a}_1 \otimes \mathbf{b}_1, \mathbf{a}_2 \otimes \mathbf{b}_2, \ldots, \mathbf{a}_K \otimes \mathbf{b}_K]$, where $\mathbf{a}_k$ and $\mathbf{b}_k$ are the $k$-th columns of $\mathbf{A}$ and $\mathbf{B}$, respectively.

\section{System Model} 

As shown in Fig. \ref{Fig_SystemModel}, we consider a generic FD MIMO-OFDM system, where the BS transmits the ISAC wireless signals to simultaneously serve $U$ user equipment terminals (UEs) and sense $Q$ surrounding targets.{\footnote{ We use the term “full-dimensional” to specify MIMO systems with large numbers of antennas in the two dimensions of the rectangular arrays \cite{Alkhateeb2017Multi}.}}
The ISAC BS is equipped with a two-dimensional (2D) antenna array of $N_{\mathrm{bs}}$ elements, $N_{\mathrm{bs}} = N_{\mathrm{bs},x} N_{\mathrm{bs},y}$, 
where $N_{\mathrm{bs},x}$ and $N_{\mathrm{bs},y}$ are the numbers of antennas along the vertical and horizontal directions, respectively. 
Unlike conventional linear arrays that restrict beamforming and sensing to a one-dimensional azimuth plane, FD MIMO utilizes 2D rectangular antenna arrays, enabling the system to fully exploit the inherently three-dimensional spatial structure of wireless environments \cite{Alkhateeb2017Multi}.
The echo signals reflected by the targets are collected by a spatially well-separated 2D receive antenna array at the BS{\footnote{Here, ``spatially well-separated'' refers to the electrical separation required for transmit--receive isolation \cite{keskin2021mimo,rahman2019framework}, rather than to an array separation comparable to the target range. Let $d_{\mathrm{tr}}$ denote the separation between the phase centers of the co-oriented transmit and receive arrays, and let $R_q$ denote the distance from their midpoint to the $q$-th target. Under the condition $\frac{d_{\mathrm{tr}}}{R_q}\ll1$, the transmit and receive directions toward the target are identical, and the propagation delay can be expressed as $\frac{2R_q}{c_0}$.}} with $N_{\mathrm{re}} = N_{\mathrm{re},x} N_{\mathrm{re},y}$ elements,
where $N_{\mathrm{re},x}$ and $N_{\mathrm{re},y}$ are the numbers of receive antennas along the vertical and horizontal directions, respectively.
To achieve the desired trade-off between power consumption and hardware cost, we consider the receive antenna array employing the hybrid beamforming (HBF) architecture, where only $M_{\mathrm{re}}$ radio frequency (RF) chains are connected to $N_{\mathrm{re}}$ antennas through a fully-connected phase-shifter network, $M_{\mathrm{re}} \ll N_{\mathrm{re}}$.

\begin{figure}[!tp]
	\centering
	\includegraphics[width=3.4in]
	{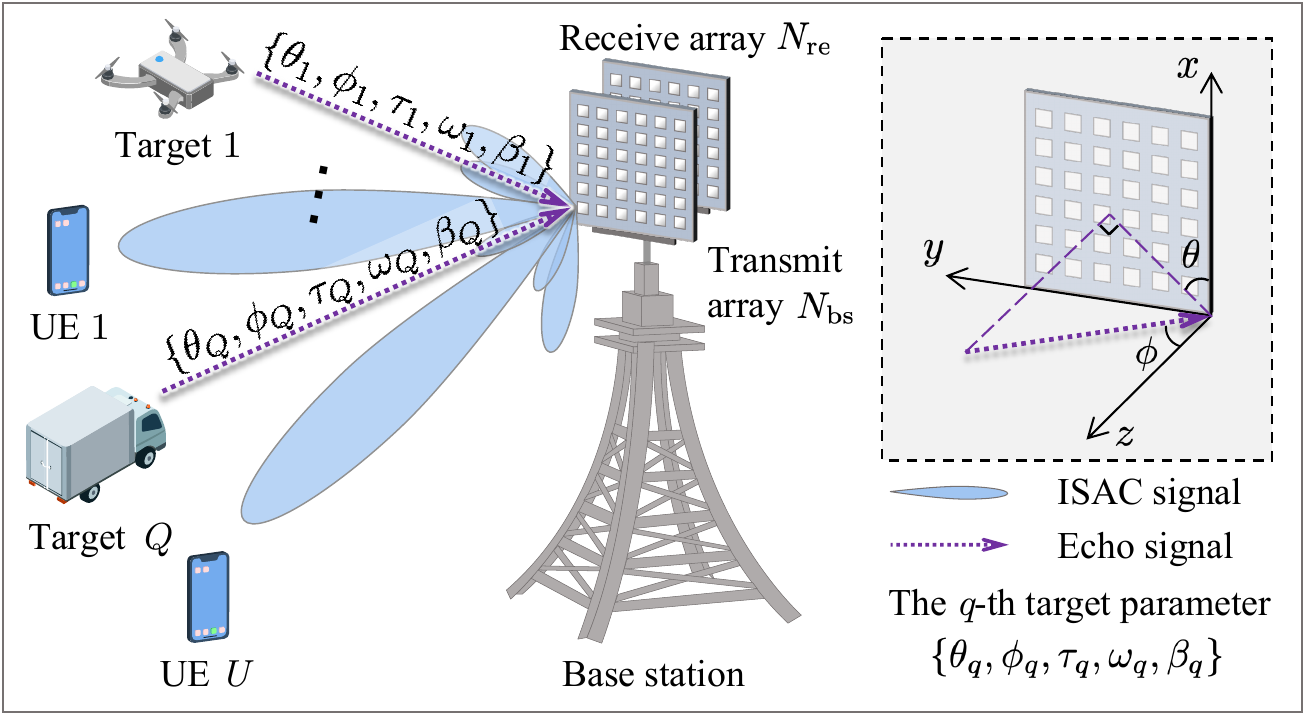}
	\caption{An FD MIMO system model where the base station serves $U$ communication UEs and simultaneously senses $Q$ targets via two-dimensional antenna arrays.}
	\label{Fig_SystemModel}
\end{figure}

An OFDM-based ISAC signal with bandwidth $B$ and $N$ symbols is employed to enable communication information transmission and radar sensing functionalities. 
The complex baseband time-domain transmit signal corresponding to the $n$-th OFDM symbol for the $u$-th user is given by
\begin{align}
s_{n,u}(t) = \frac{1}{\sqrt{K_{0}}} \sum_{k=0}^{K_{0}-1} x_{n,k,u} e^{\mathrm{j} 2 \pi k \Delta f t} \mathrm{rect}\!\left( t - n T_{\mathrm{sym}} \right),
\end{align}
where $x_{n,k,u} \in \mathbb{C}$ denotes the transmitted payload data symbol corresponding to the $u$-th UE at the $k$-th subcarrier of the $n$-th OFDM symbol, $k \in \{0, 1, \ldots, K_{0}-1\}$, $n \in \{0, 1, \ldots, N-1\}$ and 
$K_{0}$ denotes the total number of subcarriers in one OFDM symbol.
$\Delta f = B/K_{0}$ denotes the subcarrier spacing; 
$T_{\mathrm{sym}} = T_{\mathrm{eff}} + T_{\mathrm{cp}}$ is the duration of one OFDM symbol; 
$T_{\mathrm{eff}} = 1/\Delta f$ is the effective OFDM symbol duration; 
$T_{\mathrm{cp}}$ is the duration of the cyclic prefix (CP), which is employed to suppress the intersymbol interference caused by both the maximum time delay of multipath and the time delay corresponding to the maximum detectable range. 
The function $\mathrm{rect}(t)$ is the rectangular window, i.e., $\mathrm{rect}(t) = 1$ for $t \in [-T_{\mathrm{cp}}, T_{\mathrm{eff}}]$, and otherwise $\mathrm{rect}(t) = 0$.
The transmit signal accumulated over the duration of $N$ consecutive OFDM symbols is given by
\begin{align}
\mathbf{s}(t) = \frac{1}{\sqrt{K_{0}}} \sum_{n=0}^{N-1}\sum_{k=0}^{K_{0}-1} \mathbf{P} \mathbf{x}_{n,k} e^{\mathrm{j} 2 \pi k \Delta f t} \mathrm{rect}\!\left( t - n T_{\mathrm{sym}} \right), \label{equ:baseband_st}
\end{align}
where $\mathbf{s}(t) \in \mathbb{C}^{N \times 1}$, 
$\mathbf{x}_{n,k} = [x_{n,k,1}, \ldots, x_{n,k,U}]^{T} \in \mathbb{C}^{U \times 1}$ is the vector of transmitted payload data symbols for $U$ UEs and satisfies $\mathbb{E}\{\mathbf{x}_{n,k} \mathbf{x}_{n,k}^{H}\} = \mathbf{I}_{U}$.
$\mathbf{P} = \mathbf{P}_{\mathrm{RF}}\mathbf{P}_{\mathrm{BB}} \in \mathbb{C}^{N_{\mathrm{bs}} \times U}$ denotes the transmit beamforming matrix, $\mathbf{P}_{\mathrm{RF}}  \in \mathbb{C}^{N_{\mathrm{bs}} \times M_{\mathrm{bs}}}$ is the analog precoder, while $\mathbf{P}_{\mathrm{BB}} \in \mathbb{C}^{M_{\mathrm{bs}} \times U}$ is the digital precoder, and $M_{\mathrm{bs}}$ is the number of transmit RF chains. 
The transmit power constraint is $\mathrm{Tr}\{\mathbf{P} \mathbf{P}^{H}\} = \rho$ and $\rho$ denotes the transmit power budget. 
The baseband OFDM signal in \eqref{equ:baseband_st} is upconverted to the RF band as follows
\begin{align}
\tilde{\mathbf{s}}(t) = \mathbf{s}(t) e^{\mathrm{j} 2 \pi f_{c} t}, \label{equ:baseband2RFband}
\end{align}
where $f_{c}$ is the carrier frequency of the considered ISAC system. 
In this work, we focus on downlink communication and sensing, where the OFDM ISAC signal is shared to support both the sensing and communication functionalities. 
The next subsections develop the target sensing and communication transmission models.

\subsection{Target Sensing Model}

Suppose that the $q$-th target exists in the three-dimensional space at a distance $R_{q}$ with a radial velocity $V_{q}$, $q \in \{1, \ldots, Q\}$. 
Due to the 2D UPA deployed at the BS, 
the direction of the $q$-th target is described by its azimuth angle $\theta_{q}$ and elevation angle $\phi_{q}$, respectively. 
Accordingly, the received RF echo signal over a coherent processing interval (CPI) of duration $N T_{\mathrm{sym}}$ can be expressed as follows \cite{Zhang2024TWC}:
\begin{align} \label{equ:receiveSignalModel_t}
\tilde{\mathbf{y}}(t) \!=\! \sum_{q=1}^{Q} \tilde{\beta}_{q} \mathbf{a}_{\mathrm{re}}(\theta_{q}, \phi_{q})  \mathbf{a}_{\mathrm{bs}}^{T}(\theta_{q}, \phi_{q}) \tilde{\mathbf{s}}(t \!-\! \tau_{q}(t)) + \tilde{\mathbf{v}}^{\mathrm{s}}(t), 
\end{align}
where 
$\tilde{\mathbf{s}}(t)$ is the RF OFDM signal defined in \eqref{equ:baseband2RFband}, 
$\tilde{\beta}_{q}$ denotes the reflection coefficient of the $q$-th target, 
$\tilde{\mathbf{v}}^{\mathrm{s}}(t)$ denotes the propagation noise during the sensing process, 
and the time-dependent delay $\tau_{q}(t)$ is given by
\begin{align}
\tau_{q}(t) = \frac{2(R_{q} - V_{q} t)}{c_{0}} = \tau_{q} - 2 V_{q} t/c_{0}, \label{equ:time_dependent_delay}
\end{align}
where 
$\tau_{q} = \frac{2R_{q}}{c_{0}}$ represents the round-trip time delay and $c_{0}$ denotes the speed of light. 
In \eqref{equ:receiveSignalModel_t},
$\mathbf{a}_{\mathrm{re}}(\theta_{q}, \phi_{q})$ and $\mathbf{a}_{\mathrm{bs}}(\theta_{q}, \phi_{q})$ denote the steering vectors of the receive and transmit arrays, respectively. 
For $\mathbf{a}_{\mathrm{re}}(\theta_{q}, \phi_{q})$, it can be expressed in a compact form as 
\begin{align} 
	{\mathbf{a}}_{\mathrm{re}} \left(\theta_{q}, \phi_{q}\right) \triangleq \mathbf{a}({\mu_{q}; N_{\mathrm{re},x} }) \otimes \mathbf{a}({\nu_{q}; N_{\mathrm{re},y} }) \in \mathbb{C}^{N_{\mathrm{re}} \times 1}, \label{equ:a_re_steering_vec}
\end{align}
where 
$\mathbf{a}({\mu_{q}; N_{\mathrm{re},x} }) \in \mathbb{C}^{N_{\mathrm{re},x} \times 1}$ and 
$\mathbf{a}({\nu_{q}; N_{\mathrm{re},y} }) \in \mathbb{C}^{N_{\mathrm{re},y} \times 1}$ denote the vertical and horizontal steering vectors, respectively, and can be represented as
\begin{align}
\mathbf{a}(\mu_{q}; N_{\mathrm{re},x}) \triangleq \big[ 1, e^{- \mathrm{j} \frac{2 \pi d}{\lambda_c} \mu_{q}} , \ldots, e^{- \mathrm{j} \frac{2 \pi d}{\lambda_c} (N_{\mathrm{re},x}-1) \mu_{q}} \big]^{T},
\end{align}
\begin{align}
\mathbf{a}(\nu_{q}; N_{\mathrm{re},y}) \triangleq \big[ 1, e^{- \mathrm{j} \frac{2 \pi d}{\lambda_c} \nu_{q}} , \ldots, e^{- \mathrm{j} \frac{2 \pi d}{\lambda_c} (N_{\mathrm{re},y}-1) \nu_{q}} \big]^{T},
\end{align}
where $d$ denotes the antenna spacing, $\lambda_{c}$ is the wavelength, and we define the virtual angles 
$\mu_{q} \triangleq \cos\theta_{q} \sin\phi_{q}$, 
and
$\nu_{q} \triangleq \sin\theta_{q} \sin\phi_{q}$.
For $\mathbf{a}_{\mathrm{bs}}(\theta_{q}, \phi_{q})$, its expression is analogous to that of $\mathbf{a}_{\text{re}}(\theta_{q}, \phi_{q})$, i.e., $\mathbf{a}_{\mathrm{bs}}(\theta_{q}, \phi_{q}) \triangleq \mathbf{a}({\mu_{q}; N_{\mathrm{bs},x} }) \otimes \mathbf{a}({\nu_{q}; N_{\mathrm{bs},y} })$, which can be obtained by replacing 
$N_{\text{re}}$, 
$N_{\text{re},x}$, and $N_{\text{re},y}$ 
in \eqref{equ:a_re_steering_vec} with 
$N_{\text{bs}}$,
$N_{\text{bs},x}$, and $N_{\text{bs},y}$, respectively.
The goal of target sensing is to estimate the unknown parameters from the received echo signal. 

At the BS receiver, the echo signal is first down-converted by removing the carrier frequency $e^{\mathrm{j} 2 \pi f_{c} t}$ and then combined via a receive beamforming matrix $\mathbf{Q} \in \mathbb{C}^{N_{\mathrm{re}} \times M_{\mathrm{re}}}$. The resulting baseband echo signal can be expressed as
\begin{align} \label{equ:receiveSignalModel_yt}
\mathbf{y}(t) &\!=\! \mathbf{Q}^{H} \tilde{\mathbf{y}}(t) e^{- \mathrm{j} 2 \pi f_{c} t} \nonumber \\
&\!=\! \sum_{q=1}^{Q}  \tilde{\beta}_{q}  
\mathbf{Q}^{H}
\mathbf{a}_{\mathrm{re}}(\theta_{q}, \phi_{q})  \mathbf{a}_{\mathrm{bs}}^{T}(\theta_{q}, \phi_{q}) \mathbf{s}(t - \tau_{q}(t))  \nonumber \\
& \quad\quad\quad \times e^{- \mathrm{j} 2 \pi f_{c} \tau_{q}(t)}  + \mathbf{v}^{\mathrm{s}}(t) \nonumber\\
&\!\approx\! \sum_{q=1}^{Q} \beta_{q}  
\mathbf{Q}^{H}
\mathbf{a}_{\mathrm{re}}(\theta_{q}, \phi_{q})  \mathbf{a}_{\mathrm{bs}}^{T}(\theta_{q}, \phi_{q}) \mathbf{s}(t - \tau_{q}) \nonumber \\
& \quad\quad\quad \times e^{\mathrm{j} 2 \pi \omega_{q} t} + \mathbf{v}^{\mathrm{s}}(t),	
\end{align}
where $\beta_{q} = \tilde{\beta}_{q}  e^{ - \mathrm{j} 2 \pi f_{c} \tau_{q}}$ is the complex-valued reflection coefficient of the $q$-th target, $\omega_{q} = f_c \frac{2V_{q}}{c_{0}}$ denotes the round-trip Doppler shift, 
$\mathbf{v}^{\mathrm{s}}(t) = \mathbf{Q}^{H} \tilde{\mathbf{v}}^{\mathrm{s}}(t) e^{- \mathrm{j} 2 \pi f_{c} t}$ denotes the combined noise term.
In \eqref{equ:receiveSignalModel_yt}, the second equality holds by substituting \eqref{equ:receiveSignalModel_t}.
The approximation $\mathbf{s}(t - \tau_{q}(t)) \approx \mathbf{s}(t - \tau_{q})$ assumes that the target range migration over the CPI is much smaller than the range resolution, i.e., $|V_{q}| N T_{\mathrm{sym}} \ll \frac{c_{0}}{2B}$ \cite{keskin2021mimo,Zhang2020Joint}.
For the baseband echo signal in \eqref{equ:receiveSignalModel_yt}, 
after CP removal and sampling at 
$t_{s} = (n T_{\mathrm{sym}} + \frac{l}{K_{0} \Delta f})$, $l = 0, 1, \ldots, K_{0}-1$, the discrete-time echo signal corresponding to the $n$-th OFDM symbol is expressed as
\begin{align}
\mathbf{y}_{n}(l) &\!=\! \sum_{q=1}^{Q} \beta_{q}  e^{\mathrm{j} 2 \pi \omega_{q} n T_{\mathrm{sym}}}
\mathbf{Q}^{H}
\mathbf{a}_{\mathrm{re}}(\theta_{q}, \phi_{q})  \mathbf{a}_{\mathrm{bs}}^{T}(\theta_{q}, \phi_{q}) \nonumber \\
&\! \quad \times \!
\frac{1}{\sqrt{K_{0}}} \! \sum_{k=0}^{K_{0}-1} \mathbf{P} \mathbf{x}_{n,k} e^{-\mathrm{j} 2 \pi k \Delta f \tau_{q}} e^{\mathrm{j} 2 \pi k \frac{l}{K_{0}}}  + \mathbf{v}_{n}^{\mathrm{s}}(l), \label{equ:receiveSignalModel_y_n_l}
\end{align}
where $\mathbf{v}_{n}^{\mathrm{s}}(l)$ denotes the noise term of the $l$-th sampling point within the $n$-th OFDM symbol;  
the Doppler effect across different OFDM symbols satisfies $e^{\mathrm{j}2\pi\omega_{q}t} = e^{\mathrm{j}2\pi\omega_{q}nT_{\mathrm{sym}}}$ for $t \in [n T_{\mathrm{sym}}, (n + 1) T_{\mathrm{sym}}]$, since the intrapulse Doppler effect $e^{\mathrm{j}2\pi \frac{l}{K_{0}} \frac{\omega_{q}}{\Delta f}} $ is approximately equal to $1$ and can thus be neglected when $\frac{|\omega_{q}|}{\Delta f} \ll 1$ \cite{Zhang2020Joint}.
From \eqref{equ:receiveSignalModel_y_n_l}, the samples of the received echo signal can be viewed as a radar data cube in the spatial, spectral, and temporal domains, with
$M_{\mathrm{re}}$ RF chains, $K_{0}$ subcarriers, and $N$ OFDM symbols, respectively. 
The goal of target sensing is to acquire the unknown parameters 
from the received echo signal.

\subsection{Communication Transmission Model}

In this subsection, we present the communication transmission model. 
At each UE, the received OFDM signal undergoes CP removal and is subsequently transformed back to the frequency domain via a discrete Fourier transform. 
Accordingly, the received signal at the $u$-th UE can be expressed as\footnote{While we assume single-antenna UEs for simplicity, the communication transmission model can be naturally extended to the case of multi-antenna UEs.}
\begin{align}
	y_{n,k,u} 
	& = 
    \mathbf{h}_{k,u}^{H} \mathbf{P} \mathbf{x}_{n,k} +  z_{n,k,u}^{\mathrm{c}}, \label{equ:communication model_u}
\end{align}
where 
$\mathbf{h}_{k,u} \in \mathbb{C}^{N_{\mathrm{bs}} \times 1}$ is the channel vector at the $k$-th subcarrier associated with the $u$-th UE;
$z_{n,k,u}^{\mathrm{c}}\sim \mathcal{CN}(0, \sigma_{\mathrm{c}}^{2})$ denotes the additive white Gaussian noise with variance $\sigma_{\mathrm{c}}^{2}$.
In \eqref{equ:communication model_u}, the channel vector is explicitly expressed as \cite{Zhang2024IOTJChannel,Gu2024AGeneral}
\begin{align}
	\mathbf{h}_{k,u}
    & =
	\alpha_{u,0} e^{-\mathrm{j} 2 \pi \tau_{u,0} k \Delta f }  \mathbf{a}_{\mathrm{bs}}(\theta_{u,0},\phi_{u,0})\notag \\
    & \:\:\:\: +
    \sqrt{\frac{1}{PK_{f}}}\sum_{p=1}^{P} \alpha_{u,p} e^{-\mathrm{j} 2 \pi \tau_{u,p} k \Delta f }  \mathbf{a}_{\mathrm{bs}}(\theta_{u,p},\phi_{u,p}), 
\end{align}
where $\alpha_{u}$, $\tau_{u}$, $\theta_{u}$, and $\phi_{u}$ denote the coefficient, delay, azimuth angle, and elevation angle of the channel path for the $u$-th UE, respectively; the subscripts $0$ and $p$ denote the line-of-sight (LoS) path and the $p$-th non-line-of-sight (NLoS) path, respectively;
$P$ denotes the number of NLoS paths, $K_{f}$ is the Rician factor representing the energy ratio between the LoS and NLoS components.
$\mathbf{a}_{\mathrm{bs}}(\theta_{u,p}, \phi_{u,p})$ denotes the steering vector of the transmit array. 
$\mathbf{a}_{\mathrm{bs}}(\theta_{u,p}, \phi_{u,p}) \in \mathbb{C}^{N_{\mathrm{bs}} \times 1}$ is the steering vector associated with the $u$-th UE, and it can be expressed in a compact form as 
${\mathbf{a}}_{\mathrm{bs}} \left(\theta_{u,p}, \phi_{u,p}\right) \triangleq \mathbf{a}({\mu_{u,p}; N_{\mathrm{bs},x} }) \otimes \mathbf{a}({\nu_{u,p}; N_{\mathrm{bs},y} })$, where the virtual angles 
$\mu_{u,p} \triangleq \cos\theta_{u,p} \sin\phi_{u,p}$, 
and 
$\nu_{u,p} \triangleq \sin\theta_{u,p} \sin\phi_{u,p}$.
For the $u$-th user, the signal-to-interference-plus-noise ratio (SINR) is given by \cite{liuxiang2020joint,Yu2024Energy}
\begin{align}
	 \mathrm{SINR}_{k,u} = \frac{\left | 
     \mathbf{h}_{k,u}^{H} \mathbf{p}_{u}\right |^{2}}{\sum_{i\neq u}^{U}\left | 
     \mathbf{h}_{k,u}^{H} \mathbf{p}_{i}\right |^{2}+\sigma _{\mathrm{c}}^{2}},
\end{align}
which is a widely used metric for evaluating multiuser communication performance.
Therefore, the spectral efficiency (SE) of the $U$ users can be expressed as
\begin{align}
	\mathrm{SE} &= \frac{1}{K_{0}}\sum_{k=0}^{K_{0}-1}\sum_{u=1}^{U}\log_{2}(1+\mathrm{SINR}_{k,u}).
\end{align}

For the considered FD MIMO-OFDM ISAC system, the same time-frequency resources are shared for performing both communication information transmission and target sensing.  
Since pure communication signals are employed, the communication functionality remains unaffected. 
In the following, we focus on how to utilize the shared OFDM ISAC signals for target parameter estimation within the framework of coupled tensor formulation. 

\section{Coupled Tensor Formulation and Analysis for Target Parameter Estimation}
In this section, we first present the preliminaries on CPD and CCPD models. Then, we propose a coupled tensor formulation for target sensing during communication data payload transmission in FD MIMO-OFDM ISAC systems. 
On this basis, we analyze the uniqueness condition of the proposed coupled tensor formulation for target parameter estimation.

\subsection{Preliminaries on CPD and CCPD Models}
\label{subsec:CPD_CCPD}

Before introducing the proposed coupled tensor framework, we first review tensor representations and the CPD and CCPD models, which serve as the mathematical foundation for the subsequent target sensing formulation.
A tensor is a multidimensional array that naturally represents data exhibiting multilinear structures across multiple domains \cite{kolda2009tensor}. When multidimensional observations arise from the superposition of several latent components, tensor decomposition provides an effective means to uncover the underlying low-dimensional structure. 

Specifically, CPD expresses a tensor as a sum of rank-one tensors, thereby enabling the extraction of latent factors associated with different tensor modes \cite{Zhang2024TWC,Cheng2017Probabilistic}.
Consider a third-order tensor $\bm{\mathcal{X}} \in \mathbb{C}^{I \times J \times K}$ with tensor rank $R$. Its CPD representation is given by
\begin{align}
    \bm{\mathcal{X}}
    &=
    \sum_{r=1}^{R}
    \mathbf{a}_r \circ \mathbf{b}_r \circ \mathbf{c}_r \nonumber\\
    &=
    \llbracket
    \mathbf{A}, \mathbf{B}, \mathbf{C}
    \rrbracket,
\end{align}
where the vectors $\mathbf{a}_r$, $\mathbf{b}_r$, and $\mathbf{c}_r$
are the $r$-th columns of the factor matrices
$\mathbf{A} \in \mathbb{C}^{I \times R}$,
$\mathbf{B} \in \mathbb{C}^{J \times R}$, and
$\mathbf{C} \in \mathbb{C}^{K \times R}$, respectively.
These factor matrices characterize the latent structures embedded in different tensor modes and capture the intrinsic multilinear relationships underlying the observations.
An important property of CPD is that tensor unfoldings can be conveniently expressed using the Khatri–Rao product. For example, the mode-1 unfolding $\mathbf{X}_{(1)} \in \mathbb{C}^{I \times JK}$ is given by $\mathbf{X}_{(1)} = \mathbf{A} \left(\mathbf{C} \odot \mathbf{B}   \right)^T$.

When multiple tensors exhibit partially shared latent structures, they can be modeled within a coupled tensor framework \cite{Sorensen2017Multidimentional, Sorensen2017MultidimensionalII}. 
The corresponding decomposition model is referred to as CCPD, which generalizes CPD to jointly process multiple interconnected or coupled tensors. 
Consider a collection of third-order tensors 
$\{\bm{\mathcal{X}}_k\}_{k=0}^{K-1}$.
A typical CCPD model is expressed as
\begin{align}
    \bm{\mathcal{X}}_k
    &=
    \sum_{r=1}^{R}
    \mathbf{a}_{k,r}
    \circ
    \mathbf{b}_{k,r}
    \circ
    \mathbf{c}_{r}  \nonumber\\
    &=
    \llbracket
    \mathbf{A}_k,
    \mathbf{B}_k,
    \mathbf{C}
    \rrbracket,\quad
    k=0,\ldots,K-1,
\end{align}
where the vectors $\mathbf{a}_{k,r}$, $\mathbf{b}_{k,r}$, and $\mathbf{c}_r$
are the $r$-th columns of the factor matrices
$\mathbf{A}_k \in \mathbb{C}^{I \times R}$,
$\mathbf{B}_k \in \mathbb{C}^{J \times R}$, and
$\mathbf{C} \in \mathbb{C}^{K \times R}$, respectively.
Note that $\mathbf{A}_k$ and $\mathbf{B}_k$ are tensor-specific factor matrices, while $\mathbf{C}$ is a shared factor matrix across all tensors.
Consequently, CCPD not only captures the intrinsic multilinear structure within each tensor but also exploits the coupling relationships among multiple tensors through shared latent factors. 
This coupled tensor decomposition model generally provides enhanced identifiability and improved estimation accuracy compared with independent tensor decompositions.

In the following, we show how to formulate the received echo signals during communication payload transmission as a coupled tensor formulation, facilitating accurate target parameter estimation in FD MIMO-OFDM ISAC systems.

\subsection{Coupled Tensor Modeling for Target Parameter Estimation}

For the echo signal in \eqref{equ:receiveSignalModel_y_n_l}, 
by performing the discrete Fourier transform, we can obtain the frequency-domain echo signal as 
\begin{align} 
	\mathbf{y}_{n,k}
    & = \sum_{q=1}^{Q} \beta_{q} e^{-\mathrm{j} 2 \pi \tau_{q} k \Delta f}  \mathbf{Q}^{H} \mathbf{a}_{\mathrm{re}}(\theta_{q}, \phi_{q})  \mathbf{a}_{\mathrm{bs}}^{T}(\theta_{q}, \phi_{q}) \notag\\
    &\quad \;\times \mathbf{P} \mathbf{x}_{n,k} e^{\mathrm{j} 2 \pi  n \omega_{q} T_{\mathrm{sym}} } + \mathbf{v}_{n,k}^{\mathrm{s}},  \label{equ:ReceivedSignalOriginal} 
\end{align}
where $\mathbf{v}_{n,k}^{\mathrm{s}} \in \mathbb{C}^{M_{\mathrm{re}} \times 1}$ represents the additive noise vector at the $k$-th subcarrier of the $n$-th OFDM symbol. 
By concatenating the echo signals of $N$ OFDM symbols, we have
\begin{align}
	\mathbf{Y}_{k} 
	& =\! \sum_{q=1}^{Q} \gamma_{q,k}
    \mathbf{Q}^{H} \mathbf{a}_{\mathrm{re}}(\theta_{q}, \phi_{q})
	  \mathbf{a}_{\mathrm{bs}}^{T}(\theta_{q}, \phi_{q})  
	\mathbf{S}_{k} \mathbf{\Gamma}(\omega_{q})  + \mathbf{V}_{k} \nonumber\\
	& =\! \sum_{q=1}^{Q} \gamma_{q,k}  \mathbf{b}_{\mathrm{re}}(\theta_{q}, \phi_{q})  \mathbf{b}_{\mathrm{bs},k}^{T}(\mu_{q}, \nu_{q}, \omega_{q}) + \mathbf{V}_{k}, 
	 \label{equ:ReceivedSignalMatrix}
\end{align}
where  
$\mathbf{Y}_{k} = [ \mathbf{y}_{1,k}, \mathbf{y}_{2,k}, \ldots, \mathbf{y}_{N, k}] \in \mathbb{C}^{M_{\mathrm{re}} \times N }$ contains the overall echo signals at the $k$-th subcarrier; 
$\mathbf{\Gamma}(\omega_{q}) \in \mathbb{C}^{N \times N }$ is a diagonal matrix whose $n$-th diagonal element is $e^{\mathrm{j} 2 \pi  n \omega_{q} T_{\mathrm{sym}} }$;  
$\mathbf{S}_{k} =  [ \mathbf{P} \mathbf{x}_{1,k}, \ldots, \mathbf{P} \mathbf{x}_{N,k}] = \mathbf{P} \mathbf{X}_{k} \in \mathbb{C}^{M_{\mathrm{bs}} \times N }$ is the precoded data payload signal, $\mathbf{X}_{k} = [\mathbf{x}_{1,k}, \ldots, \mathbf{x}_{N,k} ]$; 
$\mathbf{V}_{k} = [\mathbf{v}_{1,k}^{\mathrm{s}}, \ldots, \mathbf{v}_{N,k}^{\mathrm{s}}] \in \mathbb{C}^{M_{\mathrm{re}} \times N }$ denotes the noise matrix.
The second equality holds by defining 
$\mathbf{b}_{\mathrm{re}}(\theta_{q}, \phi_{q}) = \mathbf{Q}^{H} \mathbf{a}_{\mathrm{re}}(\theta_{q}, \phi_{q})$, 
$\mathbf{b}_{\mathrm{bs},k}(\mu_{q}, \nu_{q}, \omega_{q}) =  \mathbf{\Gamma}(\omega_{q}) \mathbf{S}_{k}^{T} \mathbf{a}_{\mathrm{bs}}(\theta_{q}, \phi_{q})$, 
and $\gamma_{q,k}$ is defined by
\begin{align}
\gamma_{q,k} = \beta_{q} e^{-\mathrm{j} 2 \pi \tau_{q} k \Delta f}. \label{equ:gamma_q_k}
\end{align}

Note that the echo signals $\mathbf{Y}_{k}$ in \eqref{equ:ReceivedSignalMatrix} involve interactive unknown target parameters, including angles $\{\theta_{q}, \phi_{q}\}$, time delays $\{\tau_{q}\}$, and Doppler shifts $\{\omega_{q}\}$, 
such that $\mathbf{Y}_{k}$ cannot be independently decomposed into different dimensions and cannot naturally conform to the typical tensor model. 
To facilitate the target parameter estimation,
we employ the Kronecker structure-based beamforming matrix \cite{Zhang2025Tensor}, i.e., $\mathbf{Q} = \mathbf{Q}_{x} \otimes \mathbf{Q}_{y} \in \mathbb{C}^{N_{\mathrm{re}} \times M_{\mathrm{re}}}$, to receive and combine the echo signals. The signal $\mathbf{Y}_{k}$ in \eqref{equ:ReceivedSignalMatrix} can then be further expressed as
\begin{align}
	\mathbf{Y}_{k} 
	& \!=\!\! \sum_{q=1}^{Q} \! \gamma_{q,k}  \big( \mathbf{d}_{x}(\mu_{q}) \otimes \mathbf{d}_{y}(\nu_{q}) \!\big)
	\mathbf{b}_{\mathrm{bs},k}^{T}(\mu_{q}, \nu_{q}, \omega_{q}) \!+\! \mathbf{V}_{k}, \label{equ:tensor_formulation_k}
\end{align}
where the equality holds due to $\mathbf{b}_{\mathrm{re}}(\theta_{q}, \phi_{q}) = (\mathbf{Q}_{x} \otimes \mathbf{Q}_{y})^{H} \mathbf{a}_{\mathrm{re}}(\theta_{q}, \phi_{q}) = \mathbf{d}_{x}(\mu_{q}) \otimes \mathbf{d}_{y}(\nu_{q})$ and the Kronecker product property, with
$\mathbf{d}_{x}(\mu_{q}) = \mathbf{Q}_{x}^{H} \mathbf{a}(\mu_{q}; N_{\mathrm{re},x})$ and 
$\mathbf{d}_{y}(\nu_{q}) = \mathbf{Q}_{y}^{H} \mathbf{a}(\nu_{q}; N_{\mathrm{re},y})$, 
in which $\mathbf{Q}_{x} \in \mathbb{C}^{N_{\mathrm{re},x} \times M_{x}}$, 
$\mathbf{Q}_{y} \in \mathbb{C}^{N_{\mathrm{re},y} \times M_{y}}$, and $M_{\mathrm{re}} = M_{x} M_{y}$.

By exploiting the definition of CPD in Section \ref{subsec:CPD_CCPD}, we can observe that the transpose of the received echo signal at the $k$-th subcarrier, i.e., 
$\mathbf{Y}_{k}^{T}$, is a matrix form of a third-order tensor $\bm{\mathcal{Y}}_{k} \in \mathbb{C}^{N \times M_{y} \times M_{x}}$ with rank $Q$. 
The $(n+1, m_{y}, m_{x})$-th entry of $\bm{\mathcal{Y}}_{k}$ corresponds to the $(n+1)$-th row and the $((m_{x}-1)M_{y} +  m_{y})$-th column element of $\mathbf{Y}_{k}^{T}$. 
Thus, for the echo signal at each subcarrier $k$, the third-order tensor $\bm{\mathcal{Y}}_{k}$ is a sum of $Q$ rank-one tensors and admits the following CPD format
\cite{Cheng2017Probabilistic}
\begin{align}
	\bm{\mathcal{Y}}_{k} & = \sum_{q=1}^{Q} \mathbf{b}_{\mathrm{bs},k}(\mu_{q}, \nu_{q}, \omega_{q}) 
	\circ ( \gamma_{q,k} \mathbf{d}_{y}(\nu_{q}) )
	\circ \mathbf{d}_{x}(\mu_{q}) + \bm{\mathcal{V}}_{k} \nonumber\\ 
	& = \llbracket 
	\mathbf{B}_{k}, 	
	\mathbf{C}_{k}, 
	\mathbf{D} \rrbracket  + \bm{\mathcal{V}}_{k}, \label{equ:Coupled CPD Model}
\end{align}
where $\bm{\mathcal{V}}_{k} \in \mathbb{C}^{N \times M_{y} \times M_{x}}$ is the noise tensor associated with $\mathbf{V}_{k}$. 
Three matrices 
$\mathbf{B}_{k} \in  \mathbb{C}^{N \times Q}$,  
$\mathbf{C}_{k} \in  \mathbb{C}^{M_{y} \times Q}$, and 
$\mathbf{D} \in  \mathbb{C}^{M_{x} \times Q}$, are the factor matrices of $\bm{\mathcal{Y}}_{k}$, and are respectively expressed as
\begin{align}
	\mathbf{B}_{k} & \triangleq \big[ \mathbf{b}_{\mathrm{bs},k}(\mu_{1}, \nu_{1}, \omega_{1}), \ldots, \mathbf{b}_{\mathrm{bs},k}(\mu_{Q}, \nu_{Q}, \omega_{Q}) \big],  \label{equ:factor_matrix_1} \\
	\mathbf{C}_{k} & \triangleq \big[ \gamma_{1,k} \mathbf{d}_{y}(\nu_{1}), \ldots, \gamma_{Q,k} \mathbf{d}_{y}(\nu_{Q}) \big], \label{equ:factor_matrix_2} \\
	\mathbf{D} & \triangleq \big[ \mathbf{d}_{x}(\mu_{1}), \ldots, \mathbf{d}_{x}(\mu_{Q}) \big].  \label{equ:factor_matrix_3}
\end{align}

According to \eqref{equ:Coupled CPD Model}, the received echo signal at each subcarrier $k$ can be formulated as a third-order tensor admitting a CPD model. 
A key observation from this model is that the factor matrices \(\mathbf{B}_{k}\) and \(\mathbf{C}_{k}\) vary with the subcarrier index $k$ due to the subcarrier-dependent vector \(\mathbf{b}_{\mathrm{bs},k}\) and the scaling factor $\gamma_{q,k}$, 
while the factor matrix $\mathbf{D}$ remains common across all subcarriers. 
This is because $\mathbf{D}$ depends solely on the spatial directions $\mu_{q}$ via the steering vectors $\mathbf{d}_{x}(\mu_{q})$, which are invariant to subcarrier variations. 
Thus, the received echo signal tensor $\bm{\mathcal{Y}}_{k}$ across multiple subcarriers naturally exhibits a coupled CPD structure, which can be formulated as follows:
\begin{align}
\{\bm{\mathcal{Y}}_{k} = \llbracket \mathbf{B}_{k}, \mathbf{C}_{k}, \mathbf{D} \rrbracket + \bm{\mathcal{V}}_{k} \}_{k=0}^{K-1}, \label{equ:Coupled_CPD} 
\end{align}
where $K \leq K_{0}$ denotes the total number of subcarriers used for target sensing. 
The unknown target parameters, e.g., angles $\mu_{q}, \nu_{q}$, time delays $\tau_{q}$, Doppler shifts $\omega_{q}$, and reflection coefficients $\beta_{q}$, are embedded in the factor matrices, 
and the target sensing problem can be formulated as 
\begin{align} \label{equ:TargetParameterEstimation} 
	\minimize_{ \{\theta_{q}, \: \phi_{q}, \: \omega_{q}, \: \tau_{q}, \: \beta_{q} \}_{q=1}^{Q} } \sum_{k = 0}^{K-1} \Big \Vert \bm{\mathcal{Y}}_{k} - \llbracket \mathbf{B}_{k}, \mathbf{C}_{k}, \mathbf{D} \rrbracket \Big \Vert_{F}^{2}.
\end{align}

The formulation \eqref{equ:TargetParameterEstimation} represents a coupled CPD problem, where the goal is to estimate the unknown target parameters $\{\mu_{q}, \nu_{q}, \omega_{q}, \tau_{q}, \beta_{q} \}_{q=1}^{Q}$ from the subcarrier-dependent factor matrices $\mathbf{B}_{k}$, $\mathbf{C}_{k}$, and the coupled common  factor matrix $\mathbf{D}$. 
By jointly exploiting the received echo signals $\{ \bm{\mathcal{Y}}_{k} \}_{k=0}^{K-1}$ across multiple subcarriers, we can potentially enhance the parameter estimation performance with improved robustness and accuracy.
In the following subsections, we will analyze the uniqueness condition of the proposed coupled tensor formulation for target parameter estimation.

\subsection{Uniqueness Analysis for Ordinary Tensor Formulation}
\label{subsec:UniquenessCPD}

The uniqueness condition of the tensor decomposition is essential for target parameter estimation, ensuring that the decomposed factor matrices incorporate all information about the unknown parameters \cite{Zhang2024TWC,Sorensen2017Multidimentional}.
Before discussing the proposed formulation, we first present the uniqueness conditions of the CPD problem for $\bm{\mathcal{Y}}_{k}$ at each individual subcarrier $k$ in the following Lemma \ref{lemma:CPD}.

\begin{lemma} \label{lemma:CPD}
	Let $\mathbf{B}_{k}\in \mathbb{C}^{N\times Q}$, $\mathbf{C}_{k}\in \mathbb{C}^{M_{y}\times Q}$, and $\mathbf{D}\in \mathbb{C}^{M_{x}\times Q}$ be the factor matrices of $\bm{\mathcal{Y}}_{k} \in \mathbb{C}^{N\times M_{y} \times M_{x}}$.
	If 
	\begin{align}
	\begin{cases}
		\mathbf{D}\enspace \mathrm{has \enspace full \enspace column \enspace rank},\\
		\mathrm{C}_{2}(\mathbf{B}_{k})\odot \mathrm{C}_{2}(\mathbf{C}_{k}) \enspace \mathrm{has \enspace full \enspace column \enspace rank}, 
	\end{cases} \label{equ:ordinaryCPD}
	\end{align}
	then the CPD of $\bm{\mathcal{Y}}_{k}$ is unique, where $\mathrm{C}_{2}(\mathbf{B}_{k})\odot \mathrm{C}_{2}(\mathbf{C}_{k})\in\mathbb{C}^{\mathrm{C}_{N}^{2}\mathrm{C}_{M_{y}}^{2}\times\mathrm{C}_{Q}^{2}}$.
\end{lemma}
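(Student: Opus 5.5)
This lemma is the well-known uniqueness condition of Jiang and Sidiropoulos and of De Lathauwer for CPD with one full-column-rank factor. The plan is to reproduce its proof in the present notation, working with the mode-3 unfolding and the compound-matrix (Cauchy--Binet) characterization of rank-one matrices.

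Write the noiseless tensor as $\bm{\mathcal{Y}}_{k}=\llbracket \mathbf{B}_{k},\mathbf{C}_{k},\mathbf{D}\rrbracket$. Its frontal slices along the third mode give the unfolding
\begin{align}
\mathbf{Y}_{(3)}^{T}=(\mathbf{C}_{k}\odot\mathbf{B}_{k})\mathbf{D}^{T}\in\mathbb{C}^{NM_{y}\times M_{x}}.
\end{align}
First we show that $\mathbf{C}_{k}\odot\mathbf{B}_{k}$ has full column rank. If two columns $q\neq q'$ of $\mathbf{B}_{k}$ were collinear, every entry of $\mathrm{C}_{2}(\mathbf{B}_{k})$ in column $(q,q')$ would vanish. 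Then $\mathrm{C}_{2}(\mathbf{B}_{k})\odot\mathrm{C}_{2}(\mathbf{C}_{k})$ would have a zero column, which contradicts the second condition. The same holds for $\mathbf{C}_{k}$. A standard argument then gives full column rank of $\mathbf{C}_{k}\odot\mathbf{B}_{k}$: a null vector $\mathbf{f}$ gives $\mathbf{B}_{k}\mathrm{diag}(\mathbf{f})\mathbf{C}_{k}^{T}=\mathbf{0}$, and taking second compounds rules out $\mathbf{f}$ having two or more nonzeros. Combined with the full column rank of $\mathbf{D}$, this shows $\mathrm{rank}(\mathbf{Y}_{(3)})=Q$ and that the column space of $\mathbf{Y}_{(3)}^{T}$ equals that of $\mathbf{C}_{k}\odot\mathbf{B}_{k}$.

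Next, suppose $\llbracket \bar{\mathbf{B}},\bar{\mathbf{C}},\bar{\mathbf{D}}\rrbracket$ is another decomposition with $Q$ terms. By the rank count, $\bar{\mathbf{C}}\odot\bar{\mathbf{B}}$ and $\bar{\mathbf{D}}$ also have full column rank, and there is a nonsingular $\mathbf{F}\in\mathbb{C}^{Q\times Q}$ with
\begin{align}
\bar{\mathbf{C}}\odot\bar{\mathbf{B}}=(\mathbf{C}_{k}\odot\mathbf{B}_{k})\mathbf{F}.
\end{align}
Consider any column $\mathbf{f}$ of $\mathbf{F}$. The reshaped vector $\mathbf{B}_{k}\mathrm{diag}(\mathbf{f})\mathbf{C}_{k}^{T}$ is a rank-one $N\times M_{y}$ matrix, so all of its $2\times 2$ minors vanish. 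Cauchy--Binet expresses these minors as
\begin{align}
\big(\mathrm{C}_{2}(\mathbf{B}_{k})\odot\mathrm{C}_{2}(\mathbf{C}_{k})\big)\,\hat{\mathbf{f}}=\mathbf{0},
\end{align}
where $\hat{\mathbf{f}}$ collects the products $f_{q}f_{q'}$ for $q<q'$. The second condition forces $\hat{\mathbf{f}}=\mathbf{0}$, so each column of $\mathbf{F}$ has at most one nonzero entry. Since $\mathbf{F}$ is nonsingular, $\mathbf{F}=\bm{\Pi}\bm{\Lambda}$ with $\bm{\Pi}$ a permutation and $\bm{\Lambda}$ diagonal. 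Each rank-one column then splits into $\bar{\mathbf{b}}_{q}$ and $\bar{\mathbf{c}}_{q}$ up to scaling. Finally, $\bar{\mathbf{D}}$ is determined from $\mathbf{Y}_{(3)}^{T}$ through the pseudo-inverse of $\bar{\mathbf{C}}\odot\bar{\mathbf{B}}$, which gives uniqueness up to the usual permutation and scaling ambiguities.

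The main obstacle is the Cauchy--Binet step. It requires carefully checking that the $2\times 2$ minors of $\sum_{q}f_{q}\mathbf{b}_{q}\mathbf{c}_{q}^{T}$ equal exactly the rows of $(\mathrm{C}_{2}(\mathbf{B}_{k})\odot\mathrm{C}_{2}(\mathbf{C}_{k}))\hat{\mathbf{f}}$, with consistent lexicographic index ordering. It also requires confirming that the diagonal terms $q=q'$ contribute nothing, because $\mathbf{b}_{q}\mathbf{c}_{q}^{T}$ is rank one. If that bookkeeping becomes heavy, we would simply invoke the known result, Theorem 4.1 of Jiang and Sidiropoulos or Proposition 1 of Sorensen and De Lathauwer.
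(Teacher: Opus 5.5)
Your proof is correct. The paper gives no proof of Lemma~\ref{lemma:CPD}. It quotes the lemma as the known sufficient condition of Jiang--Sidiropoulos and De Lathauwer, by way of Sorensen--De Lathauwer and the authors' earlier work. Your fallback of citing that theorem is therefore exactly the paper's route, and your main argument is the standard proof behind it, which makes the lemma self-contained.

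The Cauchy--Binet bookkeeping you flag as the main obstacle closes cleanly. We have $\mathrm{C}_{2}(\mathbf{B}_{k}\,\mathrm{diag}(\mathbf{f})\,\mathbf{C}_{k}^{T})=\mathrm{C}_{2}(\mathbf{B}_{k})\,\mathrm{diag}(\hat{\mathbf{f}})\,\mathrm{C}_{2}(\mathbf{C}_{k})^{T}$. The middle factor is diagonal because $\mathrm{C}_{2}(\mathrm{diag}(\mathbf{f}))$ is diagonal with entries $f_{q}f_{q'}$, $q<q'$, so no $q=q'$ terms ever arise. Vectorizing gives $(\mathrm{C}_{2}(\mathbf{C}_{k})\odot\mathrm{C}_{2}(\mathbf{B}_{k}))\hat{\mathbf{f}}$, which is a row permutation of $(\mathrm{C}_{2}(\mathbf{B}_{k})\odot\mathrm{C}_{2}(\mathbf{C}_{k}))\hat{\mathbf{f}}$. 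The index ordering therefore does not affect the full-column-rank argument.

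Three points deserve one explicit line each:
\begin{enumerate}
\item The zero/collinear-column argument needs $Q\geq 2$. For $Q=1$ the compound matrices have no columns, and you must assume the tensor is nonzero.
\item In the full-column-rank step for $\mathbf{C}_{k}\odot\mathbf{B}_{k}$, a null vector $\mathbf{f}$ with exactly one nonzero entry must also be excluded. This holds because $\mathbf{B}_{k}$ and $\mathbf{C}_{k}$ have no zero columns, which your collinearity remark already gives, since a zero column is collinear with any other.
\item Essential uniqueness also requires that no decomposition with fewer than $Q$ terms exists. This follows at once from $\mathrm{rank}(\mathbf{Y}_{(3)})=Q$, and it justifies comparing only against $Q$-term decompositions.
\end{enumerate}

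The same argument also covers the coupled case. Run it on the stacked unfolding with one common $\mathbf{F}$, and collect the $2\times 2$ minors of every $\mathbf{B}_{k}\,\mathrm{diag}(\mathbf{f})\,\mathbf{C}_{k}^{T}$ together, so that the constraint becomes $\mathbf{G}\hat{\mathbf{f}}=\mathbf{0}$. With only notational changes this proves Lemma~\ref{lemma:coupled CPD}, which the paper likewise only quotes.
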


Lemma \ref{lemma:CPD} provides a sufficient condition to determine the uniqueness of the ordinary CPD problem. For an arbitrary subcarrier $k$, we have 
\begin{align}
\mathbf{Y}_{(3), k}^{T} = (\mathbf{B}_{k} \odot \mathbf{C}_{k}) \mathbf{D}^{T} + \mathbf{V}_{(3), k}^{T},
\end{align}
where $\mathbf{V}_{(3), k}$ is the corresponding noise matrix. 
In generic cases, the full column rank condition \eqref{equ:ordinaryCPD} is satisfied when $Q\leqslant M_{x}$ and $2Q(Q-1)\leqslant N(N-1)M_{y}(M_{y}-1)$. 
Accordingly, given the number of OFDM symbols and receive RF chains, the maximum number of targets $Q$ that can be recovered is determined. 
Nevertheless, the formulation in Lemma \ref{lemma:CPD} performs tensor decomposition independently for each subcarrier in the frequency domain and fails to exploit all information in the echo signals.

\subsection{Uniqueness Analysis for Coupled Tensor Formulation}

In this subsection, we analyze the uniqueness condition of the coupled tensor formulation by simultaneously employing the echo signals from multiple subcarriers. 
The proposed coupled CPD formulation in \eqref{equ:Coupled CPD Model} is an extension of the ordinary CPD formulation.
The coupled CPD of $\{\bm{\mathcal{Y}}_{k}\}_{k=0}^{K-1}$ given by \eqref{equ:Coupled CPD Model} has the following matrix representation
\begin{align}
\begin{bmatrix}
\mathbf{Y}_{(3), 0}^{T}\\
\vdots\\
\mathbf{Y}_{(3), K-1}^{T}
\end{bmatrix}
\!\!=\!\!
\begin{bmatrix}
(\mathbf{B}_{0}\odot \mathbf{C}_{0}) \\
 \vdots \\
(\mathbf{B}_{K-1}\odot \mathbf{C}_{K-1}) 
\end{bmatrix} \!
\mathbf{D}^{T} 
\!\!+\!\!
\begin{bmatrix}
\mathbf{V}_{(3), 0}^{T}\\
\vdots\\
\mathbf{V}_{(3), K-1}^{T}
\end{bmatrix}\!,
\end{align}
where $\mathbf{B}_{k}$ and $\mathbf{C}_{k}$ are coupled into a single matrix and a common $\mathbf{D}$ is shared. 
Compared with the tensor formulation in Section \ref{subsec:UniquenessCPD}, 
the coupled tensor formulation simultaneously exploits the echo signals from multiple subcarriers and makes it possible to utilize the inherent structure across multiple subcarriers.
The uniqueness condition of coupled CPD is given in the following Lemma \ref{lemma:coupled CPD} \cite{Sorensen2017Multidimentional}.

\begin{lemma} \label{lemma:coupled CPD}
	Consider the coupled CPD of $\bm{\mathcal{Y}}_{k} \in \mathbb{C}^{N\times M_{y} \times M_{x}}$ in \eqref{equ:Coupled_CPD}. If
	\begin{align}
		\begin{cases}
			\mathbf{D}\enspace \mathrm{has \enspace full \enspace column \enspace rank},\\
			\mathbf{G}\enspace \mathrm{has \enspace full \enspace column \enspace rank},\label{condition coupled CPD}
		\end{cases}
	\end{align}
	then the coupled CPD of $\{\bm{\mathcal{Y}}_{k}\}_{k=0}^{K-1}$ is unique, where
    \begin{align}
\mathbf{G}
=\begin{bmatrix}
\mathrm{C}_{2}(\mathbf{B}_{0})\odot \mathrm{C}_{2}(\mathbf{C}_{0}) \\
 \vdots \\
\mathrm{C}_{2}(\mathbf{B}_{K-1})\odot \mathrm{C}_{2}(\mathbf{C}_{K-1})
\end{bmatrix}
\in\mathbb{C}^{K\mathrm{C}_{N}^{2}\mathrm{C}_{M_{y}}^{2}\times\mathrm{C}_{Q}^{2}} .
\end{align}
\end{lemma}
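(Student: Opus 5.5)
The plan is the classical route of Jiang--Sidiropoulos and De Lathauwer for CPD uniqueness with a full-column-rank factor, adapted to the coupled setting as in \cite{Sorensen2017Multidimentional}.

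\emph{Step 1 (reduction to the $\mathbf{D}$-mode).} Suppose $\{\llbracket \bar{\mathbf{B}}_{k},\bar{\mathbf{C}}_{k},\bar{\mathbf{D}}\rrbracket\}_{k=0}^{K-1}$ is another coupled decomposition with $Q$ terms. Stacking the unfoldings gives
\begin{align}
\mathbf{F}\mathbf{D}^{T}=\bar{\mathbf{F}}\bar{\mathbf{D}}^{T},\quad \mathbf{F}=\begin{bmatrix}\mathbf{B}_{0}\odot\mathbf{C}_{0}\\ \vdots\\ \mathbf{B}_{K-1}\odot\mathbf{C}_{K-1}\end{bmatrix}.
\end{align}
Full column rank of $\mathbf{G}$ forces every pair of columns of $\mathbf{F}$ to be linearly independent, since $\mathrm{C}_{2}$ of a rank-one pair would vanish. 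It then forces $\mathbf{F}$ to have full column rank; this is the standard argument that $\mathrm{C}_2(\mathbf{B})\odot\mathrm{C}_2(\mathbf{C})$ full rank implies $\mathbf{B}\odot\mathbf{C}$ full rank, applied blockwise. With $\mathbf{D}$ of full column rank as well, $\bar{\mathbf{D}}$ has rank $Q$, so its column space equals that of $\mathbf{D}$. Hence $\bar{\mathbf{D}}^{T}=\mathbf{N}\mathbf{D}^{T}$ for some nonsingular $\mathbf{N}$, and $\bar{\mathbf{F}}=\mathbf{F}\mathbf{N}^{-T}$.

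\emph{Step 2 (rank-one constraint).} Each column of $\bar{\mathbf{F}}$ is, blockwise, a vectorized rank-one matrix $\bar{\mathbf{b}}_{k,r}\bar{\mathbf{c}}_{k,r}^{T}$ sharing one index $r$ across all $k$. Let $\mathbf{w}$ be a column of $\mathbf{N}^{-T}$. Then $\sum_{q}w_{q}\,\mathbf{b}_{k,q}\mathbf{c}_{k,q}^{T}$ has rank at most one for every $k$. I will express this through vanishing $2\times2$ minors. The key identity is
\begin{align}
\mathrm{C}_{2}\Big(\sum_{q} w_q\mathbf{b}_{k,q}\mathbf{c}_{k,q}^{T}\Big)=\mathrm{C}_{2}(\mathbf{B}_{k})\,\mathrm{diag}(\cdot)\,\mathrm{C}_{2}(\mathbf{C}_{k})^{T},
\end{align}
whose vectorized form is $\big(\mathrm{C}_{2}(\mathbf{B}_{k})\odot\mathrm{C}_{2}(\mathbf{C}_{k})\big)\,\mathbf{f}_2(\mathbf{w})$, where $\mathbf{f}_2(\mathbf{w})=[w_{i}w_{j}]_{i<j}$ up to the symmetric $q=q'$ terms, which vanish. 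So all these minors vanish jointly for $k=0,\ldots,K-1$ exactly when $\mathbf{G}\,\mathbf{f}_2(\mathbf{w})=\mathbf{0}$.

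\emph{Step 3 (conclusion).} Full column rank of $\mathbf{G}$ gives $w_iw_j=0$ for all $i<j$, so every column of $\mathbf{N}^{-T}$ has at most one nonzero entry. Invertibility then makes $\mathbf{N}$ a scaled permutation matrix, so $\bar{\mathbf{D}}=\mathbf{D}\mathbf{\Pi}\mathbf{\Lambda}$. Each block then yields $\bar{\mathbf{B}}_k\odot\bar{\mathbf{C}}_k=(\mathbf{B}_k\odot\mathbf{C}_k)\mathbf{\Pi}\mathbf{\Lambda}^{-1}$, and the rank-one columns determine $\bar{\mathbf{b}}_{k,r},\bar{\mathbf{c}}_{k,r}$ up to scaling. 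This is exactly coupled-CPD uniqueness: a common permutation across $k$, with scaling indeterminacies.

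\emph{Main obstacle.} The delicate step is Step 2: verifying the compound-matrix identity carefully, including that the diagonal $q=q'$ contributions vanish. The other delicate point is the claim in Step 1 that $\mathbf{G}$ full rank implies $\mathbf{F}$ full rank. For the latter I would argue by contradiction. If $\mathbf{F}\mathbf{x}=\mathbf{0}$ with $\mathbf{x}\neq\mathbf{0}$, then the rank of $\sum x_q\mathbf{b}_{k,q}\mathbf{c}_{k,q}^{T}$ is zero for all $k$. Applying the Step 2 machinery to a suitable perturbation then gives $\mathbf{G}\mathbf{f}_2\neq$ full rank. Alternatively, I would invoke the corresponding lemma of \cite{Sorensen2017Multidimentional}.
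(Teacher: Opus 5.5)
Your argument is correct, and it is the Jiang--Sidiropoulos/De Lathauwer route: full column rank of $\mathbf{D}$ and of the stacked $\mathbf{F}$, then Cauchy--Binet turning the rank-one constraint into $\mathbf{G}\,\mathbf{f}_2(\mathbf{w})=\mathbf{0}$; this is the argument behind the coupled-CPD theorem of \cite{Sorensen2017Multidimentional}, which the paper simply invokes without giving a proof of its own. Two small fixes: from $\bar{\mathbf{D}}^{T}=\mathbf{N}\mathbf{D}^{T}$ one gets $\bar{\mathbf{F}}=\mathbf{F}\mathbf{N}^{-1}$ rather than $\mathbf{F}\mathbf{N}^{-T}$, and ``$\mathbf{G}$ full column rank implies $\mathbf{F}$ full column rank'' cannot be obtained blockwise (individual blocks of $\mathbf{G}$ need not have full column rank) but follows directly from your Step~2 identity: $\mathbf{F}\mathbf{x}=\mathbf{0}$ gives $\mathbf{G}\,\mathbf{f}_2(\mathbf{x})=\mathbf{0}$, so $\mathbf{x}=x_j\mathbf{e}_j$, and then $\mathbf{b}_{k,j}\otimes\mathbf{c}_{k,j}=\mathbf{0}$ for all $k$ kills every column of $\mathbf{G}$ indexed by a pair containing $j$ (for $Q\geq 2$), a contradiction.
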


Comparing Lemma \ref{lemma:coupled CPD} with Lemma \ref{lemma:CPD}, we observe that the coupled CPD formulation in \eqref{equ:Coupled_CPD} relaxes the uniqueness condition required for parameter identifiability. 
To be specific, 
Lemma \ref{lemma:CPD} demands that the Khatri–Rao product $\mathrm{C}_{2}(\mathbf{B}_{k})\odot\mathrm{C}_{2}(\mathbf{C}_{k})$ be full column rank for \emph{each} individual subcarrier $k$, 
while Lemma \ref{lemma:coupled CPD} only requires the \emph{aggregated} matrix $\mathbf{G}$, which is constructed by stacking these products across all $K$ subcarriers, to be full column rank. 
This structural difference is beneficial: the coupled CPD model exploits the fact that the spatial factor $\mathbf{D}$ is common across subcarriers, while allowing the temporal/spectral factors $\mathbf{B}_{k}$ and $\mathbf{C}_{k}$ to vary.
Accordingly, the proposed coupled CPD formulation fundamentally differs from the ordinary CPD formulation by extending the tensor decomposition from one arbitrary subcarrier to multiple tensors across $K$ subcarriers that share one common factor matrix. 
The relaxation afforded by the coupled CPD directly offers two key practical advantages for target sensing in FD MIMO-ISAC systems. 
\begin{itemize}
	\item In practical ISAC systems employing communication data payload signals, the signal matrix on a given subcarrier may become ill‑conditioned due to the randomness of OFDM symbols, possibly violating the ordinary CPD condition in Lemma \ref{lemma:CPD}. By jointly processing the echoes from multiple subcarriers, the coupled CPD can average out such randomness, since it is sufficient that the overall stack $\mathbf{G}$ remains well‑conditioned, making target parameter estimation far more robust.
    
    \item The aggregation operation effectively integrates echoes over the broad OFDM bandwidth. The increase in row dimension of $\mathbf{G}$ with $K$ enhances the algebraic diversity of the factorization, which can improve the estimation accuracy and lower the signal-to-noise ratio (SNR) threshold for reliable target parameter recovery. This also naturally aligns with the communication frame structure, where data payload signals are dispersed over multiple OFDM subcarriers.
\end{itemize}

Therefore, the coupled CPD formulation can capture the inherent frequency diversity and spatial consistency of the FD MIMO-OFDM ISAC system. 
By unifying the received echoes into a single coupled CPD model, the proposed coupled tensor framework exploits the inherent structure of the random communication data payload for target sensing, instead of treating signal randomness as a hindrance.
Accordingly, the proposed coupled CPD formulation does not incur any loss of communication spectral efficiency, since it is constructed directly from the received echoes of the OFDM data payload signals over multiple subcarriers, without requiring extra pilot symbols or additional time-frequency resources.

\section{Target Parameter Estimation Algorithm via Coupled CPD}

In this section, we propose a target parameter estimation algorithm within the framework of coupled tensor decomposition. 
The proposed algorithm consists of two stages. 
In the first stage, we describe factor-matrix estimation based on the coupled CPD.
In the second stage, we illustrate how to extract the unknown target parameters by leveraging the estimated coupled factor matrices.

\subsection{Estimation of Coupled Factor Matrices}
In this subsection, we present the details of how to estimate the factor matrices from the tensor $\bm{\mathcal{Y}}_{k}$, $k \in \{0, 1, \ldots, K-1\}$. 
To address this issue, one of the most well-known algorithms for ordinary CPD is the alternating least squares (ALS) method \cite{kolda2009tensor}. 
To be specific,  
for each subcarrier $k$, the factor-matrix estimation problem is formulated as follows:
\begin{align}
\left\lbrace \hat{\mathbf{B}}_{k}, 	\hat{\mathbf{C}}_{k}, \hat{\mathbf{D}} \right\rbrace 
= \underset{
\mathbf{B}_{k}, 	
\mathbf{C}_{k}, 
\mathbf{D}}
{\mathrm{argmin}} \Big \Vert \bm{\mathcal{Y}}_{k} - \llbracket
\mathbf{B}_{k}, 	
\mathbf{C}_{k}, 
\mathbf{D} 
\rrbracket \Big \Vert_{F}^{2}. \label{equ:ALS_formulation}
\end{align}

Although ordinary CPD with the ALS method is efficient for estimating factor matrices, it does not exploit the inherent frequency diversity and spatial consistency of the factor matrices, thus limiting the performance of target sensing.
To make full use of the information in the received echo signals across multiple subcarriers, 
we propose to formulate a coupled CPD problem that jointly decomposes the subcarrier-specific factor matrices $\mathbf{B}_{k}$ and $\mathbf{C}_{k}$ 
together with the shared coupled factor matrix $\mathbf{D}$, i.e.,
\begin{equation} 
\left\lbrace \hat{\mathbf{B}}_{k}, 	\hat{\mathbf{C}}_{k}, \hat{\mathbf{D}} \right\rbrace 
= \underset{
\mathbf{B}_{k}, 	
\mathbf{C}_{k}, 
\mathbf{D}}
{\mathrm{argmin}} \sum_{k} \Big \Vert \bm{\mathcal{Y}}_{k} - \llbracket
\mathbf{B}_{k}, 	
\mathbf{C}_{k}, 
\mathbf{D} 
\rrbracket \Big \Vert_{F}^{2}. \label{equ:CCPD}
\end{equation}

To address the CCPD problem in \eqref{equ:CCPD}, we first optimize the shared factor matrix $\mathbf{D}$ while keeping all subcarrier-specific matrices $\mathbf{B}_{0}, \mathbf{B}_{1}, \ldots, \mathbf{B}_{K-1}$ and $\mathbf{C}_{0}, \mathbf{C}_{1}, \ldots, \mathbf{C}_{K-1}$ fixed. 
Subsequently, for each subcarrier $k$, we perform alternating updates: $\mathbf{B}_{k}$ is refined with $\mathbf{C}_{k}$ and $\mathbf{D}$ held constant, followed by updating $\mathbf{C}_{k}$ while fixing $\mathbf{B}_{k}$ and $\mathbf{D}$. This procedure iterates across all subcarriers, with update steps given as follows:
\begin{align}
	\mathbf{D}^{(i+1)}=\arg\min_{\mathbf{D}}
	\left\|
	\begin{bmatrix}\mathbf{Y}_{(3),0}\\ \vdots\\ \mathbf{Y}_{(3),K-1}\end{bmatrix} \!\!-\! 
    \mathbf{D}\!\!
	\begin{bmatrix}(\mathbf{C}_{0}^{(i)}\odot \mathbf{B}_{0}^{(i)})^{T}\\\vdots\\(\mathbf{C}_{K-1}^{(i)}\odot \mathbf{B}_{K-1}^{(i)})^{T} \!\end{bmatrix} \!
	\right\|_{F}^{2}\!,\label{equ:coupled_ALS1}
\end{align}
\begin{align}
\mathbf{B}_{k}^{(i+1)}=\arg\min_{\mathbf{B}_{k}}\left\| \mathbf{Y}_{(1),k}-\mathbf{B}_{k}(\mathbf{D}^{(i+1)}\odot\mathbf{C}_{k}^{(i)})^{T}\right\|_{F}^{2},\label{equ:coupled_ALS2}
\end{align}
\begin{align}
\mathbf{C}_{k}^{(i+1)}=\arg\min_{\mathbf{C}_{k}}\left\| \mathbf{Y}_{(2),k}-\mathbf{C}_{k}(\mathbf{D}^{(i+1)}\odot\mathbf{B}_{k}^{(i+1)})^{T}\right\|_{F}^{2},\label{equ:coupled_ALS3}
\end{align}
where $\mathbf{B}_{k}^{(i)}$, $\mathbf{C}_{k}^{(i)}$, and $\mathbf{D}^{(i)}$ are the updated factor matrices in the $i$-th iteration. 
For a least-squares problem of the form 
$ \hat{\mathbf{X}}=\arg\min_{\mathbf{X}}\left\| \mathbf{Y}-\mathbf{X}\mathbf{Z}^{T}\right\|_{F}^{2}$, 
the optimal solution is given by 
$\hat{\mathbf{X}}=\mathbf{Y}\mathbf{Z}^{*} (\mathbf{Z}^{T}\mathbf{Z}^{*})^{-1}$. 
Accordingly, the solutions to \eqref{equ:coupled_ALS1}, \eqref{equ:coupled_ALS2}, and \eqref{equ:coupled_ALS3} are given by
\begin{align}
	\mathbf{D}^{(i+1)}
    = & \left(\sum_{k=0}^{K-1}\mathbf{Y}_{(3),k}(\mathbf{C}_{k}^{(i)}\odot \mathbf{B}_{k}^{(i)})^{*}\right)\notag\\
	& \times \left(\sum_{k=0}^{K-1}(\mathbf{C}_{k}^{(i)}\odot\mathbf{B}_{k}^{(i)})^{T}(\mathbf{C}_{k}^{(i)}\odot\mathbf{B}_{k}^{(i)})^{*}\right)^{-1}, \label{equ:D_est} \\
	\mathbf{B}_{k}^{(i+1)}
	= & \mathbf{Y}_{(1),k}(\mathbf{D}^{(i+1)}\odot\mathbf{C}_{k}^{(i)})^{*}\notag\\
	& \times \left((\mathbf{D}^{(i+1)}\odot\mathbf{C}_{k}^{(i)})^{T}(\mathbf{D}^{(i+1)}\odot\mathbf{C}_{k}^{(i)})^{*}\right)^{-1}, \label{equ:B_est} \\
	\mathbf{C}_{k}^{(i+1)}
	= & \mathbf{Y}_{(2),k}(\mathbf{D}^{(i+1)}\odot\mathbf{B}_{k}^{(i+1)})^{*}\notag\\
	& \times \left((\mathbf{D}^{(i+1)}\odot\mathbf{B}_{k}^{(i+1)})^{T}(\mathbf{D}^{(i+1)}\odot\mathbf{B}_{k}^{(i+1)})^{*}\right)^{-1}. \label{equ:C_est} 
\end{align}

By iterating the three steps in an alternating manner,
we can obtain estimates of the factor matrices, i.e., 
$\hat{\mathbf{B}}_{k}$, 
$\hat{\mathbf{C}}_{k}$, 
and 
$\hat{\mathbf{D}}$.
The estimated factor matrices and the true factor matrices satisfy the relation
\begin{align} 
\hat{\mathbf{B}}_{k} & = \mathbf{B}_{k}   \mathbf{\Pi} \mathbf{\Delta}_{k}^{(1)} + \mathbf{E}_{k}^{(1)}, k=0, \ldots, K-1, \label{equ:FactorMatrixEst1} \\
\hat{\mathbf{C}}_{k} & = \mathbf{C}_{k}  \mathbf{\Pi} \mathbf{\Delta}_{k}^{(2)} + \mathbf{E}_{k}^{(2)}, k=0, \ldots, K-1,  \label{equ:FactorMatrixEst2} \\
\hat{\mathbf{D}} & = \mathbf{D}  \mathbf{\Pi} \mathbf{\Delta}^{(3)} + \mathbf{E}^{(3)}, \label{equ:FactorMatrixEst3} 
\end{align}
where $\mathbf{\Delta}_{k}^{(1)}$, $\mathbf{\Delta}_{k}^{(2)}$, and $\mathbf{\Delta}^{(3)}$ are the unknown diagonal scaling matrices due to the coupled tensor decomposition; 
$\mathbf{\Pi} \in \mathbb{C}^{Q \times Q}$ represents a permutation matrix common for every factor matrix; 
$\mathbf{E}_{k}^{(1)}$ , $\mathbf{E}_{k}^{(2)}$, and $\mathbf{E}^{(3)}$ are the associated error matrices of tensor factorization.

\subsection{Estimation of Unknown Parameters}
\label{subsec:TVFS Channel Parameters}

In this subsection, we turn to the extraction of unknown target parameters $\theta_{q}$, $\phi_{q}$, $\tau_{q}$, $\beta_{q}$, and $\omega_{q}$ from the estimated factor matrices $\lbrace \hat{\mathbf{B}}_{k}, \hat{\mathbf{C}}_{k}, \hat{\mathbf{D}} \rbrace$, $k=0, \ldots, K-1$. 
Note that the estimated factor matrices are subject to a permutation ambiguity, denoted by the matrix $\mathbf{\Pi}$, which may alter the order of the estimated targets. 
Fortunately, because this ambiguity applies consistently across all estimated factor matrices, it does not disrupt the pairing relationship among different factor matrices. As a result, parameters belonging to the same target remain correctly aligned within their corresponding columns across all dimensions, enabling the automatic pairing of parameters for each specific target. 
In the following, we demonstrate how to estimate the angle, Doppler, delay, and reflection coefficient parameters.

\subsubsection{Estimation of Angle Parameters}
For the considered target sensing problem in FD MIMO-OFDM ISAC systems, each target has azimuth and elevation angles $\theta_{q}$ and $\phi_{q}$, which are equivalently expressed as $\mu_{q}$ and $\nu_{q}$.
Based on the tensor formulation in \eqref{equ:tensor_formulation_k}, these angles are incorporated in $\hat{\mathbf{C}}_{k}$ and $\hat{\mathbf{D}}$, as shown in \eqref{equ:FactorMatrixEst2} and \eqref{equ:FactorMatrixEst3}. 
Taking advantage of the tensor decomposition, one can estimate $\mu_{q}$ and $\nu_{q}$ in a parallel manner. 
To be specific, for $\mu_{q}$, it is associated with each column of $\hat{\mathbf{D}}$, i.e.,
\begin{align}
	\hat{\mathbf{d}}_{q} = \delta_{q}^{(3)} \mathbf{d}_{x}(\mu_{q}) + \mathbf{e}_{q}^{(3)}, 
\end{align}
where $\hat{\mathbf{d}}_{q}$ is the $q$-th column of the estimated factor matrix $\hat{\mathbf{D}}$, $\delta_{q}^{(3)}$ is the $q$-th diagonal element of the unknown diagonal matrix $\mathbf{\Delta}^{(3)}$, and $\mathbf{e}_{q}^{(3)}$ is the $q$-th column of the associated error matrix $\mathbf{E}^{(3)}$.
Accordingly, we can separately extract the angle $\mu_{q}$ via the correlation-based method with respect to each column of $\hat{\mathbf{D}}$ \cite{zhou2017low}, i.e.,
\begin{align} 
\hat{\mu}_{q} 
= \arg\max_{\mu} \frac{\Big| \hat{\mathbf{d}}_{q}^{H} \mathbf{d}_{x}(\mu) \Big|^{2}}
{\big\| \hat{\mathbf{d}}_{q} \big\|_{2}^{2} 
 \big\| \mathbf{d}_{x}(\mu) \big\|_{2}^{2}}. \label{equ:mu_est}
\end{align}

For the parameter $\nu_{q}$, the factor-matrix relationship in \eqref{equ:FactorMatrixEst2} shows that it is incorporated in the factor matrix $\hat{\mathbf{C}}_{k}$, whose $q$-th column is denoted as 
\begin{align}
\hat{\mathbf{c}}_{q,k} = \delta_{q,k}^{(2)} \gamma_{q,k} \mathbf{d}_{y}(\nu_{q}) + \mathbf{e}_{q,k}^{(2)}, k \in \{0, 1, \ldots, K-1\}, \label{equ:c_q_k}
\end{align}
where $\hat{\mathbf{c}}_{q,k}$ and $\mathbf{e}_{q,k}^{(2)}$ are the $q$-th columns of $\hat{\mathbf{C}}_{k}$ and $\mathbf{E}_{k}^{(2)}$, respectively, and $\delta_{q,k}^{(2)}$ is the $q$-th diagonal element of $\mathbf{\Delta}_{k}^{(2)}$. 
To obtain $\nu_{q}$ from \eqref{equ:c_q_k}, one simple way is to perform correlation-based estimation for each $k$, as in \eqref{equ:mu_est}. However, this method is not able to fully utilize the information in the echo signals from all subcarriers. 
Furthermore, the presence of the unknown complex scalar $\delta_{q,k}^{(2)}$ introduces scaling ambiguity. To this end, we propose to formulate the estimation problem of $\nu_{q}$ as follows
\begin{align}\label{equ:nu_formulation}
    \hat{\nu}_{q} = \arg\min_{\nu_{q}} \sum_{k=0}^{K-1} \left\|\hat{\mathbf{c}}_{q,k} - \delta_{q,k}^{(2)} \gamma_{q,k} \mathbf{d}_{y}(\nu_{q})\right\|_{2}^{2},
\end{align}
which exploits information from $K$ subcarriers. 
For problem \eqref{equ:nu_formulation}, we have the following lemma.
\begin{lemma}\label{lemma:solution_c_q_k}
    The solution of the problem \eqref{equ:nu_formulation} is given by
    \begin{align}
        \hat{\nu}_{q} = \arg\max_{\nu _{q}}\sum_{k=0}^{K-1}\frac{\left | \mathbf{d}_{y}^{H}(\nu _{q})\hat{\mathbf{c}}_{q,k}\right |^{2}}{\left\| \mathbf{d}_{y}(\nu _{q})\right\|_{2}^{2}\left\| \hat{\mathbf{c}}_{q,k}\right\|_{2}^{2}},\label{equ:nu_est}
    \end{align}
    which is the maximum likelihood solution of $\nu_{q}$.
\end{lemma}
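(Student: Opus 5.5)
The plan is to treat the unknown complex gains as nuisance parameters and concentrate them out of the cost in \eqref{equ:nu_formulation}. For each subcarrier $k$ we set $\alpha_{q,k} \triangleq \delta_{q,k}^{(2)} \gamma_{q,k}$. Since $\delta_{q,k}^{(2)}$ is an arbitrary unknown scaling from the coupled CPD, and since $\gamma_{q,k}$ carries the unknown $\beta_q$ and $\tau_q$, the quantities $\{\alpha_{q,k}\}_{k=0}^{K-1}$ can be regarded as $K$ free, mutually unconstrained complex scalars. The problem then becomes the joint minimization
\begin{align}
\min_{\nu_q,\{\alpha_{q,k}\}} \sum_{k=0}^{K-1}\big\|\hat{\mathbf{c}}_{q,k}-\alpha_{q,k}\mathbf{d}_{y}(\nu_q)\big\|_{2}^{2}, \nonumber
\end{align}
and it separates across $k$ once $\nu_q$ is fixed. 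This decoupling is the step I would flag as the main point requiring care. The reduction is exact only if the $\alpha_{q,k}$ are not tied together across $k$. The structure $e^{-\mathrm{j}2\pi\tau_q k\Delta f}$ in $\gamma_{q,k}$ is destroyed by the arbitrary per-subcarrier $\delta_{q,k}^{(2)}$, so dropping that structure loses nothing, and I would state this explicitly.

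For fixed $\nu_q$, each summand is a scalar least-squares problem in $\alpha_{q,k}$. Its minimizer is $\hat{\alpha}_{q,k}=\mathbf{d}_{y}^{H}(\nu_q)\hat{\mathbf{c}}_{q,k}/\|\mathbf{d}_{y}(\nu_q)\|_2^2$. Substituting it back gives the projection residual
\begin{align}
\|\hat{\mathbf{c}}_{q,k}\|_2^2-\frac{|\mathbf{d}_{y}^{H}(\nu_q)\hat{\mathbf{c}}_{q,k}|^{2}}{\|\mathbf{d}_{y}(\nu_q)\|_2^2}, \nonumber
\end{align}
which is simply the squared norm of the component of $\hat{\mathbf{c}}_{q,k}$ orthogonal to $\mathbf{d}_y(\nu_q)$. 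Summing over $k$, the term $\sum_k\|\hat{\mathbf{c}}_{q,k}\|_2^2$ does not depend on $\nu_q$. Minimizing the concentrated cost is therefore equivalent to maximizing $\sum_k |\mathbf{d}_{y}^{H}(\nu_q)\hat{\mathbf{c}}_{q,k}|^2/\|\mathbf{d}_y(\nu_q)\|_2^2$.

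The remaining factor $1/\|\hat{\mathbf{c}}_{q,k}\|_2^2$ in \eqref{equ:nu_est} is also independent of $\nu_q$, but it reweights the terms across $k$. To recover it, I would normalize each column $\hat{\mathbf{c}}_{q,k}$ to unit norm before solving \eqref{equ:nu_formulation}. This is legitimate because the CPD scaling ambiguity $\delta_{q,k}^{(2)}$ already absorbs any per-$k$ scaling, so rescaling $\hat{\mathbf{c}}_{q,k}$ leaves the model class unchanged. With unit-norm columns, the criterion coincides with \eqref{equ:nu_est}. I would remark that this amounts to equal weighting of the subcarriers, which is justified because the gains are unidentifiable per $k$.

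Finally, for the maximum likelihood claim, I would model the errors $\mathbf{e}_{q,k}^{(2)}$ in \eqref{equ:c_q_k} as independent across $k$ and i.i.d. circularly symmetric complex Gaussian with a common variance. Under that model the negative log-likelihood is an affine function of the cost in \eqref{equ:nu_formulation}. The concentrated ML estimator of $\nu_q$, with the $\alpha_{q,k}$ treated as deterministic nuisance parameters, is therefore exactly the maximizer derived above. This completes the plan for the argument.
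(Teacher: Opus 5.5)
Your core argument is the one in the paper's appendix. You absorb $\delta_{q,k}^{(2)}\gamma_{q,k}$ into a single free complex scalar per subcarrier (the paper's $\tilde{\delta}_{q,k}^{(2)}$), solve the scalar least-squares problem, and substitute back to obtain the projection residual $\|\hat{\mathbf{c}}_{q,k}\|_2^2-|\mathbf{d}_y^H(\nu_q)\hat{\mathbf{c}}_{q,k}|^2/\|\mathbf{d}_y(\nu_q)\|_2^2$.

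You part ways with the paper on the weighting, and there you are more careful than the paper. The appendix factors $\|\hat{\mathbf{c}}_{q,k}\|_2^2$ out of each summand and then drops it as though it were an overall proportionality constant. That step is not valid, because the factor varies with $k$. Taken literally, \eqref{equ:nu_formulation} with the raw columns gives $\arg\max_{\nu_q}\sum_{k}|\mathbf{d}_y^H(\nu_q)\hat{\mathbf{c}}_{q,k}|^2/\|\mathbf{d}_y(\nu_q)\|_2^2$, exactly as you found. \eqref{equ:nu_est} follows only after normalizing each column, or when the norms $\|\hat{\mathbf{c}}_{q,k}\|_2$ coincide across $k$. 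Your normalization makes this tacit assumption explicit. Your justification is the right one: the per-subcarrier scale of $\hat{\mathbf{c}}_{q,k}$ is an artifact of the CPD scaling ambiguity. Present the normalization as a reformulation of \eqref{equ:nu_formulation} (equal weighting of subcarriers), not as a consequence of it.

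Your last paragraph still needs fixing, because it repeats the paper's slip in the likelihood model. Suppose the raw errors $\mathbf{e}_{q,k}^{(2)}$ in \eqref{equ:c_q_k} are i.i.d. $\mathcal{CN}(0,\sigma_{\mathrm{s}}^{2}\mathbf{I}_{M_y})$ for every $k$. Then the concentrated ML estimator is the unweighted criterion above, not \eqref{equ:nu_est}. Once you divide by $\|\hat{\mathbf{c}}_{q,k}\|_2$, the error variance becomes $\sigma_{\mathrm{s}}^2/\|\hat{\mathbf{c}}_{q,k}\|_2^2$, which is no longer common across $k$. ML on the normalized data with these variances simply brings you back to the unweighted cost. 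For the ML claim to match \eqref{equ:nu_est}, place the common-variance Gaussian model on the normalized columns $\hat{\mathbf{c}}_{q,k}/\|\hat{\mathbf{c}}_{q,k}\|_2$. Equivalently, let the raw error variance scale with $\|\hat{\mathbf{c}}_{q,k}\|_2^2$, which is natural precisely because the raw column scale is arbitrary. With that adjustment, both the optimization claim and the ML claim follow from your argument.
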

\begin{proof}
    See Appendix \ref{proof_c_q_k}.
\end{proof}

Based on the estimated $\hat{\mu}_{q}$,  $\hat{\nu}_{q}$, and the relationships $\mu_{q} \triangleq \cos\theta_{q} \sin\phi_{q}$ and
$\nu_{q} \triangleq \sin\theta_{q} \sin\phi_{q}$, we can obtain the estimates of the azimuth and elevation angles through the elementary trigonometric operations as 
\begin{align}
\hat{\theta}_{q} &= \arctan(\frac{\hat{\nu}_{q}}{\hat{\mu}_{q}}),  \label{equ:theta_est} \\
\hat{\phi}_{q}  &= \arcsin(\sqrt{\hat{\mu}_{q}^{2}+\hat{\nu}_{q}^{2}}). \label{equ:phi_est}
\end{align}

\subsubsection{Estimation of Doppler Parameters}

As shown in \eqref{equ:factor_matrix_1}, the Doppler parameters $\{ \omega_{q} \}$ are incorporated in the factor matrix $\mathbf{B}_{k}$. 
Based on the estimated factor matrix $\hat{\mathbf{B}}_{k}$ in \eqref{equ:FactorMatrixEst1}, its $q$-th column is given by
\begin{align} \label{equ:Sigma_a_theta}
	\hat{\mathbf{b}}_{q,k}
	& = \delta_{q,k}^{(1)} \mathbf{\Gamma}(\omega_{q}) \mathbf{S}_{k}^{T} \mathbf{a}_{\mathrm{bs}}(\theta_{q}, \phi_{q}) + \mathbf{e}_{q,k}^{(1)}, 
\end{align}
where $\delta_{q,k}^{(1)}$ is the $q$-th diagonal element of $\mathbf{\Delta}_{k}^{(1)}$ and $\mathbf{e}_{q,k}^{(1)}$ is the $q$-th column of $\mathbf{E}_{k}^{(1)}$, $k = 0, 1, \ldots, K-1$.
As can be observed from \eqref{equ:Sigma_a_theta}, the unknown parameters $\delta_{q,k}^{(1)}$, $\omega_{q}$, $\theta_{q}$, and $\phi_{q}$ are coupled together, making joint estimation of these four parameters very challenging.
Besides, in the considered FD MIMO system, the need to estimate both azimuth and elevation angles results in an extremely high computational complexity for traditional alternating optimization methods \cite{Zhang2024TWC}.
To deal with the aforementioned difficulties, 
we leverage the estimated azimuth and elevation angles in \eqref{equ:theta_est} and \eqref{equ:phi_est} as prior information and formulate the Doppler shift estimation problem as
\begin{align} 
\hat{\omega}_{q} 
& = \arg\min_{\omega_{q}} 
\sum_{k=0}^{K-1} \Big\lVert \hat{\mathbf{b}}_{q,k}  -  \delta_{q,k}^{(1)} \mathbf{\Gamma}(\omega_{q}) \mathbf{S}_{k}^{T} \mathbf{a}_{\mathrm{bs}}(\hat{\theta}_{q}, \hat{\phi}_{q}) \Big\rVert_{2}^2 \nonumber\\
& = \arg\min_{\omega_{q}} 
\sum_{k=0}^{K-1} \Big\lVert \hat{\mathbf{b}}_{q,k}  - \delta_{q,k}^{(1)} \mathbf{Q}_{k} \mathbf{a}_{\mathrm{do}}(\omega_{q}) \Big\rVert_{2}^2, \label{equ:Doppler_formulation}
\end{align}
where the second equality follows from defining a diagonal matrix $\mathbf{Q}_{k}$, whose diagonal elements are $\mathbf{S}_{k}^{T} \mathbf{a}_{\mathrm{bs}}(\hat{\theta}_{q}, \hat{\phi}_{q})$, and extracting the diagonal elements of $\mathbf{\Gamma}(\omega_{q})$ as $\mathbf{a}_{\mathrm{do}}(\omega_{q})$, which is expressed as
\begin{align} \label{equ:a_doppler}
	\mathbf{a}_{\mathrm{do}}(\omega_{q}) = \Big[1, e^{\mathrm{j} 2 \pi \omega_{q} T_{\mathrm{sym}} }, \ldots, e^{\mathrm{j} 2 \pi  (N-1) \omega_{q} T_{\mathrm{sym}} } \Big]^{T}.
\end{align}

For problem \eqref{equ:Doppler_formulation}, we have the following lemma. 
\begin{lemma}\label{lemma:solution_b_q_k}
	The solution of the problem \eqref{equ:Doppler_formulation} is given by
	\begin{align}
		\hat{\omega}_{q} = \arg\max_{\omega_{q}}\sum_{k=0}^{K-1}\frac{\left | (\mathbf{Q}_{k}^{T}\mathbf{a}_{\mathrm{do}}(\omega_{q}))^{H}\hat{\mathbf{b}}_{q,k}\right |^{2}}{\left\| \mathbf{Q}_{k}^{T}\mathbf{a}_{\mathrm{do}}(\omega_{q})\right\|_{2}^{2}\left\| \hat{\mathbf{b}}_{q,k}\right\|_{2}^{2}},\label{equ:omega_est}
	\end{align}
	 which is the maximum likelihood solution of $\omega_{q}$.
\end{lemma}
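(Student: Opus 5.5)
The plan is to treat \eqref{equ:Doppler_formulation} as a separable least-squares problem. The complex scalars $\{\delta_{q,k}^{(1)}\}_{k=0}^{K-1}$ are nuisance parameters, one per subcarrier. Each of them appears only in its own summand, so the inner minimization over $\delta_{q,k}^{(1)}$ can be done in closed form for fixed $\omega_q$. Substituting the optimal $\delta_{q,k}^{(1)}$ back leaves a concentrated cost that depends only on $\omega_q$. This mirrors the argument in Appendix \ref{proof_c_q_k} for Lemma \ref{lemma:solution_c_q_k}, with $\gamma_{q,k}\mathbf{d}_y(\nu_q)$ replaced by the vector $\mathbf{g}_k(\omega_q)\triangleq \mathbf{Q}_k\mathbf{a}_{\mathrm{do}}(\omega_q)$. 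Since $\mathbf{Q}_k$ is diagonal, $\mathbf{Q}_k=\mathbf{Q}_k^T$, so $\mathbf{g}_k(\omega_q)=\mathbf{Q}_k^T\mathbf{a}_{\mathrm{do}}(\omega_q)$. This explains the transpose in \eqref{equ:omega_est}.

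\textbf{Step 1 (fix $\omega_q$, optimize each $\delta_{q,k}^{(1)}$).} The function $\|\hat{\mathbf{b}}_{q,k}-\delta\,\mathbf{g}_k\|_2^2$ is convex quadratic in $\delta\in\mathbb{C}$. Setting its Wirtinger derivative to zero gives
\begin{align*}
\hat\delta_{q,k}=\frac{\mathbf{g}_k^H\hat{\mathbf{b}}_{q,k}}{\|\mathbf{g}_k\|_2^2},
\end{align*}
which is valid whenever $\mathbf{g}_k\neq\mathbf{0}$. This holds generically, because $\mathbf{a}_{\mathrm{do}}$ has unit-modulus entries and $\mathbf{S}_k^T\mathbf{a}_{\mathrm{bs}}$ is almost surely nonzero for random payload.

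\textbf{Step 2 (concentrate the cost).} Substituting $\hat\delta_{q,k}$ yields the orthogonal-projection residual
\begin{align*}
\|\hat{\mathbf{b}}_{q,k}\|_2^2-\frac{|\mathbf{g}_k^H\hat{\mathbf{b}}_{q,k}|^2}{\|\mathbf{g}_k\|_2^2}.
\end{align*}
Summing over $k$, the terms $\sum_k\|\hat{\mathbf{b}}_{q,k}\|_2^2$ do not depend on $\omega_q$. Minimizing the sum is therefore equivalent to maximizing $\sum_k |\mathbf{g}_k^H\hat{\mathbf{b}}_{q,k}|^2/\|\mathbf{g}_k\|_2^2$.

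\textbf{Step 3 (match the stated normalization).} To reach the exact form of \eqref{equ:omega_est}, I would divide each summand by $\|\hat{\mathbf{b}}_{q,k}\|_2^2$. This factor is a constant independent of $\omega_q$. Strictly, dividing by $k$-dependent constants reweights the sum and so does not preserve the argmax in general. I would handle this the way the paper's companion lemma presumably does: first normalize the columns $\hat{\mathbf{b}}_{q,k}$ to unit norm. This is allowed because the CPD scaling ambiguity $\delta_{q,k}^{(1)}$ already absorbs any per-$k$ scaling. Equivalently, one can pose the problem on the normalized columns, after which the factor $\|\hat{\mathbf{b}}_{q,k}\|_2^2=1$ and the two objectives coincide. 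This is the step I expect to be the main subtlety, and I would state the normalization convention explicitly.

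\textbf{Step 4 (maximum-likelihood claim).} For the ML statement, I would model the factorization errors $\mathbf{e}_{q,k}^{(1)}$ as i.i.d. circularly symmetric complex Gaussian, independent across $k$. Under this model the negative log-likelihood is proportional to the cost in \eqref{equ:Doppler_formulation}. Its joint maximization over $(\omega_q,\{\delta_{q,k}^{(1)}\})$ is exactly the concentrated problem of Steps 1–2, so \eqref{equ:omega_est} is the ML estimator of $\omega_q$, conditional on the angle estimates $(\hat\theta_q,\hat\phi_q)$ plugged into $\mathbf{Q}_k$.
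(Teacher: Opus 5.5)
Your argument is the paper's own proof in Appendix~\ref{proof_b_q_k}: a Gaussian error model, the closed-form per-subcarrier scale $\hat{\delta}_{q,k}^{(1)}=(\mathbf{Q}_{k}^{T}\mathbf{a}_{\mathrm{do}}(\omega_{q}))^{H}\hat{\mathbf{b}}_{q,k}/\|\mathbf{Q}_{k}^{T}\mathbf{a}_{\mathrm{do}}(\omega_{q})\|_{2}^{2}$, and substitution back into the cost. The only place you differ is Step~3, and there you are right where the paper is not: the paper writes the concentrated cost as $\propto -\sum_{k}(1-\cdots)$ and silently drops the $k$-dependent weights $\|\hat{\mathbf{b}}_{q,k}\|_{2}^{2}$ (its Appendix~\ref{proof_c_q_k} makes the same move and does not normalize either), so your explicit unit-norm convention is a genuine repair; just place the i.i.d.\ Gaussian model of Step~4 on the errors of the normalized columns, since equal-variance errors on the raw columns make your Step~2 objective, not \eqref{equ:omega_est}, the ML estimator.
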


\begin{proof}
See Appendix \ref{proof_b_q_k}.
\end{proof}

\subsubsection{Estimation of Delay and Reflection Coefficient Parameters}
We next address the estimation of delay and reflection coefficient parameters. 
Note that these two parameters are incorporated in the factor matrix $\mathbf{B}_{k}$ and coupled in $\gamma_{q,k} = \beta_{q} e^{-\mathrm{j} 2 \pi \tau_{q} k \Delta f}$.
To deal with this issue, 
based on the echo signal tensor model in \eqref{equ:tensor_formulation_k}, by taking the mode-2 unfolding of $\bm{\mathcal{Y}}$ and applying the transpose and vectorization operations, we can obtain the echo signal representation as 
\begin{align}
	\mathrm{vec} \big(\mathbf{Y}_{(2),k}^{T}\big) 
	& = \mathbf{C}_{y} \odot  ( \mathbf{D} \odot \mathbf{B}_{k} ) \bm{\gamma}_{k} + \mathrm{vec} \big(\mathbf{N}_{(2),k}^{T}\big),
\end{align}
where $\bm{\gamma}_{k} = [\gamma_{1,k}, \ldots, \gamma_{Q,k}]^{T}  \in \mathbb{C}^{Q \times 1}$; 
and $\mathbf{C}_{y} = [\mathbf{d}_{y}(\nu_{1}), \ldots, \mathbf{d}_{y}(\nu_{Q})] \in \mathbb{C}^{M_{y} \times Q}$; 
$\mathbf{N}_{(2),k}$ denotes the associated noise term at the $k$-th subcarrier. 
Following the least-squares criterion, the estimate of $\bm{\gamma}_{k}$ can be obtained by 
\begin{align} 
	\hat{\bm{\gamma}}_{k} = \big(\bar{\mathbf{C}}_{y} \odot  ( \bar{\mathbf{D}} \odot \bar{\mathbf{B}}_{k} ) \big)^{-1} \mathrm{vec} \big(\mathbf{Y}_{(2),k}^{T}\big),
\end{align}
where $\bar{\mathbf{C}}_{y}$, $\bar{\mathbf{D}}$, and $\bar{\mathbf{B}}_{k}$ are constructed based on the estimated parameters $\lbrace \hat{\nu}_{q} \rbrace_{q=1}^{Q}$,  $\lbrace \hat{\mu}_{q} \rbrace_{q=1}^{Q}$, and $\lbrace \hat{\omega}_{q} \rbrace_{q=1}^{Q}$.

To this end, explicit estimates of the reflection coefficients and time delays have not been obtained. 
To address this issue, it is necessary to combine the coefficients $\bm{\gamma}_{k}$ of multiple subcarriers. 
From \eqref{equ:gamma_q_k}, we have
\begin{align}\label{equ:gamma_matrix}
	\big[\bm{\gamma}_{0}, \ldots, \bm{\gamma}_{K-1}\big] =  \mathrm{diag}(\beta_{1}, \beta_{2}, \ldots, \beta_{Q}) \mathbf{A}_{\mathrm{td}}^{T} \in \mathbb{C}^{Q \times K},
\end{align}
where the $q$-th column of $\mathbf{A}_{\mathrm{td}} \in \mathbb{C}^{Q \times K}$ is constructed from $\mathbf{a}_{\mathrm{td}}(\tau_{q})$, which is expressed as
\begin{align}
    \mathbf{a}_{\mathrm{td}}(\tau_{q}) = \Big[1, e^{-\mathrm{j} 2 \pi \tau_{q} \Delta f}, \ldots, e^{-\mathrm{j} 2 \pi \tau_{q} (K-1) \Delta f} \Big]^{T}.
\end{align}

From \eqref{equ:gamma_matrix}, we can obtain the following relationship for delay $\tau_{q}$:
\begin{align}
    \bm{\gamma}_{q} = \beta_{q} \mathbf{a}_{\mathrm{td}}(\tau_{q}),
\end{align}
where $ \bm{\gamma}_{q} = [\gamma_{q,0},\ldots,\gamma_{q,K-1}]^{T} $ is the $q$-th column of the matrix $\left[\bm{\gamma}_{0}, \ldots, \bm{\gamma}_{K-1}\right]^{T}$.
Based on the estimated $\left\{ \hat{\bm{\gamma}}_{k} \right\}_{k=0}^{K-1}$ of $K$ subcarriers, we can construct the coefficient matrix $\big[\hat{\bm{\gamma}}_{0}, \ldots, \hat{\bm{\gamma}}_{K-1}\big]^{T}$ and obtain the estimated $\hat{\bm{\gamma}}_{q}$, i.e., $\hat{\bm{\gamma}}_{q} = [\hat{\gamma}_{q,0},\ldots,\hat{\gamma}_{q,K-1}]^{T}$. 
Then, we can estimate the time delay as
\begin{align} \label{equ:tau_est_beamSquint}
	\hat{\tau}_{q} & = \arg\max_{\tau_{q}} \frac{\Big|\hat{\bm{\gamma}}_{q}^{H} \mathbf{a}_{\mathrm{td}}(\tau_{q}) \Big|^{2} }{\left\| \hat{\bm{\gamma}}_{q} \right\|_{2}^{2} \left\|\mathbf{a}_{\mathrm{td}}(\tau_{q})\right\|_{2}^{2}}.
\end{align}  

With the estimated $\left\{\hat{\bm{\gamma}}_{q}\right\}_{q=1}^{Q} $ and $\left\{\hat{\tau}_{q}\right\}_{q=1}^{Q} $, the reflection coefficient is obtained by
\begin{align} \label{equ:beta_est_beamSquint}
	\hat{\beta}_{q} = \mathbf{a}_{\mathrm{td}}^{\dagger}(\hat{\tau}_{q})\hat{\bm{\gamma}}_{q}.
\end{align}

To this end, the unknown five-tuple set of target parameters $\lbrace \hat{\theta}_{q}, \hat{\phi}_{q}, \hat{\omega}_{q}, \hat{\tau}_{q}, \hat{\beta}_{q} \rbrace_{q=1}^{Q}$ can be obtained by the proposed two-stage coupled CPD-based target sensing algorithm, which is summarized in Algorithm \ref{alg:ParaEst}.
Based on the estimated $\left\{\hat{\tau}_{q} \right\}_{q=1}^{Q}$ and $\left\{\hat{\omega}_{q} \right\}_{q=1}^{Q}$, the ranges and velocities of the targets can be obtained by
\begin{align}
    \hat{R}_{q} = \frac{\hat{\tau}_{q} c_{0}}{2}, \\
    \hat{V}_{q} =  \frac{\hat{\omega}_{q} c_{0}}{2 f_c}. 
\end{align}

\begin{algorithm}[!tp]
	\renewcommand{\algorithmicrequire}{\textbf{Input:}}  
	\renewcommand{\algorithmicensure}{\textbf{Output:}}  
	\caption{Proposed Coupled CPD-Based Target Sensing Algorithm}
	\label{alg:ParaEst}
	\begin{algorithmic}[1]
		\Require
		The received echo signals $\mathbf{y}_{n,k}$, $\forall k \in \{0, 1, \ldots, K-1\}$, $\forall n \in \{0, 1, \ldots, N-1\}$, $\mathbf{P}$.

        \For {$i = 1,\ldots, I_{\mathrm{ALS}}$}
		\State Compute the factor matrices $\mathbf{D}^{(i)}$,  $\mathbf{B}_{k}^{(i)}$, $\mathbf{C}_{k}^{(i)}$ via \eqref{equ:D_est}-\eqref{equ:C_est}, respectively;
		\EndFor
        \State Return $\hat{\mathbf{D}} = \mathbf{D}^{(I_{\mathrm{ALS}})} $, $\hat{\mathbf{B}}_{k} = \mathbf{B}_{k}^{(I_{\mathrm{ALS}})} $, and $\hat{\mathbf{C}}_{k} = \mathbf{C}_{k}^{(I_{\mathrm{ALS}})} $;
		\For {each $q \in [1,\ldots,Q]$}
		\State Estimate $\hat{\mu}_{q}$ via \eqref{equ:mu_est};
		\State Estimate $\hat{\nu}_{q}$ via \eqref{equ:nu_est};
		\State Estimate $\hat{\theta}_{q}$ and $\hat{\phi}_{q}$ via \eqref{equ:theta_est} and \eqref{equ:phi_est};
        \State Estimate $\hat{\omega}_{q}$ via \eqref{equ:omega_est};
        \State Estimate $\hat{\tau}_{q}$ via \eqref{equ:tau_est_beamSquint};
        \State Estimate $\hat{\beta}_{q}$ via \eqref{equ:beta_est_beamSquint};
        \EndFor		
		\Ensure
		Return the estimated target parameters $\lbrace \hat{\theta}_{q}, \hat{\phi}_{q}, \hat{\omega}_{q}, \hat{\tau}_{q}, \hat{\beta}_{q} \rbrace_{q=1}^{Q}$.
	\end{algorithmic}
\end{algorithm}

\subsection{Computational Complexity Analysis}

We summarize the computational complexity of the proposed coupled CPD-based target sensing method in Algorithm \ref{alg:ParaEst}. 
Given that the number of ALS iterations is $I_{\mathrm{ALS}}$,
the complexity of estimating factor matrices is 
$\mathcal{O}(I_{\mathrm{ALS}}QKNM_{y}M_{x})$.
Given the numbers of grid points in the one-dimensional searches as $I_{G_{\mu}}$, $I_{G_{\nu}}$, $I_{G_{\omega}}$, and $I_{G_{\tau}}$,
the estimation of $\lbrace\hat{\mu}_{q}\rbrace_{q=1}^{Q}$ and
$\lbrace\hat{\nu}_{q}\rbrace_{q=1}^{Q}$
have complexity orders of
$\mathcal{O}(QI_{G_{\mu}} M_{x})$ and
$\mathcal{O}(Q K I_{G_{\nu}} M_{y})$, respectively.
The complexity of estimating $\lbrace\hat{\theta}_{q}\rbrace_{q=1}^{Q}$ and $\lbrace\hat{\phi}_{q}\rbrace_{q=1}^{Q}$ via \eqref{equ:theta_est} and \eqref{equ:phi_est} is $\mathcal{O}(QI_{G_{\mu}}M_{x})+\mathcal{O}(Q K I_{G_{\nu}}M_{y})$.
The estimation of
$\lbrace\hat{\omega}_{q}\rbrace_{q=1}^{Q}$,
$\lbrace\hat{\tau}_{q}\rbrace_{q=1}^{Q}$, and
$\lbrace\hat{\beta}_{q}\rbrace_{q=1}^{Q}$
have complexity orders of
$\mathcal{O}(Q K I_{G_{\omega}}N)$,
$\mathcal{O}(QI_{G_{\tau}}M_{y})$, and
$\mathcal{O}(2 Q^{2}M_{x}M_{y}N+Q^{3})$, respectively.
Compared with the ordinary CPD method that estimates the factor matrices using only a single subcarrier, the proposed coupled CPD method jointly processes the echo signal tensors from $K$ subcarriers, thereby increasing the computational complexity by a factor of $K$. Thus, the proposed method primarily enhances the parameter estimation performance at the cost of increased receiver-side processing complexity.

\subsection{Cramér-Rao Bound Analysis}

To establish a theoretical performance benchmark for target parameter estimation, we present the Cramér–Rao bound (CRB) for the considered target sensing problem in FD MIMO-OFDM ISAC systems. 
The CRB characterizes the minimum achievable estimation variance of any unbiased estimator and thus provides a fundamental limit for evaluating sensing performance \cite{kay1993fundamentals,Zhang2024IOTJChannel}.
We first define the unknown parameter vector as
\begin{equation}
    \boldsymbol{\xi}
    =
    [
    {\boldsymbol{\theta}^{T},
    \boldsymbol{\phi}^{T},
    \boldsymbol{\omega}^{T},
    \boldsymbol{\tau}^{T},
    \boldsymbol{\beta}_{\rm R}^{T},
    \boldsymbol{\beta}_{\rm I}^{T}}
    ]^{T},
\end{equation}
where
$\boldsymbol{\theta}=[\theta_1,\ldots,\theta_Q]^T$,
$\boldsymbol{\phi}=[\phi_1,\ldots,\phi_Q]^T$,
$\boldsymbol{\omega}=[\omega_1,\ldots,\omega_Q]^T$, 
$\boldsymbol{\tau}=[\tau_1,\ldots,\tau_Q]^T$, 
$\boldsymbol{\beta}_{\rm R}=[{\rm Re}\{\beta_1\},\ldots,{\rm Re}\{\beta_Q\}]^T$ and
$\boldsymbol{\beta}_{\rm I}=[{\rm Im}\{\beta_1\},\ldots,{\rm Im}\{\beta_Q\}]^T$,
denote the azimuth angles, elevation angles, Doppler shifts, time delays, the real and imaginary parts of the reflection coefficients associated with the $Q$ targets, respectively.

For the received echo signal tensor model in \eqref{equ:Coupled_CPD}, since the additive noise components across different subcarriers are mutually independent, 
the log-likelihood function associated with the $K$ subcarriers can be expressed as
\begin{align}
    f(\boldsymbol{\xi })
    & =\text{const}-\frac{1}{{\sigma_{\mathrm{s}}^{2}}}\sum\limits_{k=0}^{K-1}{\left\| \mathbf{Y}_{(1),k}^{T}-(\mathbf{D}\odot {{\mathbf{C}}_{k}})\mathbf{B}_{k}^{T} \right\|_{F}^{2}}, \\ 
    & =\text{const}-\frac{1}{{{\sigma }_{\mathrm{s}}^{2}}}\sum\limits_{k=0}^{K-1}{\left\| \mathbf{Y}_{(2),k}^{T}-(\mathbf{D}\odot {{\mathbf{B}}_{k}})\mathbf{C}_{k}^{T} \right\|_{F}^{2}}, \\ 
    & =\text{const}-\frac{1}{{{\sigma }_{\mathrm{s}}^{2}}}\sum\limits_{k=0}^{K-1}{\left\| \mathbf{Y}_{(3),k}^{T}-({{\mathbf{C}}_{k}}\odot {{\mathbf{B}}_{k}}){{\mathbf{D}}^{T}} \right\|_{F}^{2}}, 
\end{align}
where $\text{const}=-KNM_xM_y\ln(\pi\sigma_{\mathrm{s}}^2)$, 
$\mathbf{Y}_{(1),k}$, $\mathbf{Y}_{(2),k}$, and $\mathbf{Y}_{(3),k}$ denote the mode-1, mode-2, and mode-3 unfoldings of the received tensor corresponding to the $k$-th subcarrier, respectively.

Based on the above likelihood function, the complex Fisher information matrix (FIM) can be calculated as
\begin{align}
    \boldsymbol{\Omega}(\boldsymbol{\xi}) = \mathbb{E}\left\{\left(\frac{\partial f(\boldsymbol{\xi})}{\partial \boldsymbol{\xi}}\right)\left(\frac{\partial f(\boldsymbol{\xi})}{\partial \boldsymbol{\xi}}\right)^H\right\},
\end{align}
which quantifies the amount of information carried by the observations with respect to the unknown parameters. 
According to the different parameter categories, the FIM can be partitioned into the following block matrix form:
\begin{equation}
\boldsymbol{\Omega}(\boldsymbol{\xi})
\!=\!\!
\begin{bmatrix}
{\boldsymbol{\Omega}_{\boldsymbol{\theta\theta}}} &
{\boldsymbol{\Omega}_{\boldsymbol{\theta\phi}}} &
{\boldsymbol{\Omega}_{\boldsymbol{\theta\omega}}} &
{\boldsymbol{\Omega}_{\boldsymbol{\theta\tau}}} &
{\boldsymbol{\Omega}_{\boldsymbol{\theta}\boldsymbol{\beta}_{\rm{R}}}} &
{\boldsymbol{\Omega}_{\boldsymbol{\theta}\boldsymbol{\beta}_{\rm{I}}}} \\
{\boldsymbol{\Omega}_{\boldsymbol{\phi\theta}}} &
{\boldsymbol{\Omega}_{\boldsymbol{\phi\phi}}} &
{\boldsymbol{\Omega}_{\boldsymbol{\phi\omega}}} &
{\boldsymbol{\Omega}_{\boldsymbol{\phi\tau}}} &
{\boldsymbol{\Omega}_{\boldsymbol{\phi}\boldsymbol{\beta}_{\rm{R}}}} &
{\boldsymbol{\Omega}_{\boldsymbol{\phi}\boldsymbol{\beta}_{\rm{I}}}} \\
{\boldsymbol{\Omega}_{\boldsymbol{\omega\theta}}} &
{\boldsymbol{\Omega}_{\boldsymbol{\omega\phi}}} &
{\boldsymbol{\Omega}_{\boldsymbol{\omega\omega}}} &
{\boldsymbol{\Omega}_{\boldsymbol{\omega\tau}}} &
{\boldsymbol{\Omega}_{\boldsymbol{\omega}\boldsymbol{\beta}_{\rm{R}}}} &
{\boldsymbol{\Omega}_{\boldsymbol{\omega}\boldsymbol{\beta}_{\rm{I}}}} \\
{\boldsymbol{\Omega}_{\boldsymbol{\tau\theta}}} &
{\boldsymbol{\Omega}_{\boldsymbol{\tau\phi}}} &
{\boldsymbol{\Omega}_{\boldsymbol{\tau\omega}}} &
{\boldsymbol{\Omega}_{\boldsymbol{\tau\tau}}} &
{\boldsymbol{\Omega}_{\boldsymbol{\tau}\boldsymbol{\beta}_{\rm{R}}}} &
{\boldsymbol{\Omega}_{\boldsymbol{\tau}\boldsymbol{\beta}_{\rm{I}}}} \\
{\boldsymbol{\Omega}_{\boldsymbol{\beta}_{\rm{R}}\boldsymbol{\theta}}} &
{\boldsymbol{\Omega}_{\boldsymbol{\beta}_{\rm{R}}\boldsymbol{\phi}}} &
{\boldsymbol{\Omega}_{\boldsymbol{\beta}_{\rm{R}}\boldsymbol{\omega}}} &
{\boldsymbol{\Omega}_{\boldsymbol{\beta}_{\rm{R}}\boldsymbol{\tau}}} &
{\boldsymbol{\Omega}_{\boldsymbol{\beta}_{\rm{R}}\boldsymbol{\beta}_{\rm{R}}}} &
{\boldsymbol{\Omega}_{\boldsymbol{\beta}_{\rm{R}}\boldsymbol{\beta}_{\rm{I}}}} \\
{\boldsymbol{\Omega}_{\boldsymbol{\beta}_{\rm{I}}\boldsymbol{\theta}}} &
{\boldsymbol{\Omega}_{\boldsymbol{\beta}_{\rm{I}}\boldsymbol{\phi}}} &
{\boldsymbol{\Omega}_{\boldsymbol{\beta}_{\rm{I}}\boldsymbol{\omega}}} &
{\boldsymbol{\Omega}_{\boldsymbol{\beta}_{\rm{I}}\boldsymbol{\tau}}} &
{\boldsymbol{\Omega}_{\boldsymbol{\beta}_{\rm{I}}\boldsymbol{\beta}_{\rm{R}}}} &
{\boldsymbol{\Omega}_{\boldsymbol{\beta}_{\rm{I}}\boldsymbol{\beta}_{\rm{I}}}}
\end{bmatrix}.
\end{equation}

In the calculation of the blocks involving the reflection coefficients, we use
$\partial\mathbf{C}_{k}/\partial\boldsymbol{\beta}_{\rm R}=\mathbf{C}_{\beta,k}$ and
$\partial\mathbf{C}_{k}/\partial\boldsymbol{\beta}_{\rm I}=\mathrm{j} \mathbf{C}_{\beta,k}$, where the $q$-th column of $\mathbf{C}_{\beta,k}$ is
$e^{-\mathrm{j}2\pi\tau_q k\Delta f}\mathbf{d}_{y}(\theta_q,\phi_q)$. 
The CRB matrix is subsequently obtained as the inverse of the FIM, namely,
\begin{equation}
    {\rm CRB}(\boldsymbol{\xi})
    =
    \boldsymbol{\Omega}(\boldsymbol{\xi})^{-1}.
\end{equation}

Since \(\boldsymbol{\xi}\) jointly incorporates all unknown target parameters, the CRBs corresponding to individual parameter categories can be extracted from the associated diagonal blocks of \(\operatorname{CRB}(\boldsymbol{\xi})\). Specifically, following the real-valued parameter order in $\boldsymbol{\xi}$, the root mean square CRBs for the azimuth angles, elevation angles, Doppler shifts, time delays, and reflection coefficients are respectively given by
\begin{align}
 & {\operatorname{CRB}(\boldsymbol{\theta} )=\sqrt{\frac{1}{Q}\operatorname{Tr}\left( {{[\operatorname{CRB}(\boldsymbol{\xi})]}_{1:Q,1:Q}} \right)}}, \\ 
 & {\operatorname{CRB}(\boldsymbol{\phi} )=\sqrt{\frac{1}{Q}\operatorname{Tr}\left( {{[\operatorname{CRB}(\boldsymbol{\xi})]}_{Q+1:2Q,Q+1:2Q}} \right)}}, \\ 
 & {\operatorname{CRB}(\boldsymbol{\omega} )=\sqrt{\frac{1}{Q}\operatorname{Tr}\left( {{[\operatorname{CRB}(\boldsymbol{\xi})]}_{2Q+1:3Q,2Q+1:3Q}} \right)}}, \\ 
 & {\operatorname{CRB}(\boldsymbol{\tau} )=\sqrt{\frac{1}{Q}\operatorname{Tr}\left( {{[\operatorname{CRB}(\boldsymbol{\xi })]}_{3Q+1:4Q,3Q+1:4Q}} \right)}}, \\ 
 & {\operatorname{CRB}(\boldsymbol{\beta}_{\rm R} )=\sqrt{\frac{1}{Q}\operatorname{Tr}\left( {{[\operatorname{CRB}(\boldsymbol{\xi })]}_{4Q+1:5Q,4Q+1:5Q}} \right)}}, \\
 & {\operatorname{CRB}(\boldsymbol{\beta}_{\rm I} )=\sqrt{\frac{1}{Q}\operatorname{Tr}\left( {{[\operatorname{CRB}(\boldsymbol{\xi })]}_{5Q+1:6Q,5Q+1:6Q}} \right)}}.
\end{align}

These CRBs establish the theoretical lower bounds on the estimation errors of the corresponding target parameters and will serve as performance benchmarks for evaluating the effectiveness of the proposed CCPD target sensing method in the subsequent numerical results.

\section{Simulation Results}
\label{Sec:Simulation}

In this section, we present numerical simulations to investigate the performance of the proposed CCPD target sensing method in FD MIMO-OFDM ISAC systems. 
The ISAC BS is equipped with an FD MIMO array, which contains $N_{\mathrm{bs}} = 256$ transmit antennas with $N_{\mathrm{bs},x} = N_{\mathrm{bs},y} = 16$. 
For the receive array, we employ the HBF architecture with $N_{\mathrm{re}} = 256$ receive antennas and $M_{\mathrm{re}} = 16$ RF chains, where $N_{\mathrm{re},x} = N_{\mathrm{re},y} = 16$. 
For the combiners $\mathbf{Q}_{x}$ and $\mathbf{Q}_{y}$, each element is uniformly drawn from the unit circle.
The carrier frequency is $f_{c} = 28$ GHz, the bandwidth of OFDM signals is $B = 100$ MHz with $K_{0} = 128$ subcarriers, and the CP duration is set to $T_{\mathrm{cp}} = \frac{1}{2\Delta f}$.
The power spectral density of the noise is $-174$ dBm/Hz.
The BS serves $U=4$ UEs and simultaneously detects $Q = 3$ targets. 
The conjugate beamforming $\mathbf{P}$ is generated to point toward the UE directions,\footnote{
Under the hybrid beamforming architecture, the transmit beamforming matrix is expressed as $\mathbf{P}=\mathbf{P}_{\rm RF}\mathbf{P}_{\rm BB}$.
Then $\mathbf{P}_{\rm RF}$ and $\mathbf{P}_{\rm BB}$ can be obtained by the following optimization problem \cite{Qi2022Hybrid}, i.e., $\min_{\mathbf{P}_{\rm RF},\mathbf{P}_{\rm BB}}
    \left\|
    \mathbf{P}-\mathbf{P}_{\rm RF}\mathbf{P}_{\rm BB}
    \right\|_{F}^{2},
    {\rm s.t.}
    |[\mathbf{P}_{\rm RF}]_{m,n}|=1$, 
where $\mathbf{P}$ is designed to point toward the LoS directions of UEs.} and the QAM constellation is adopted for data payload transmission \cite{Liu2020Joint}.
The LoS directions $\{\mu_{u,0}, \nu_{u,0}\}_{u=1}^{U}$ of these UEs are set to
$\{0, 0\}$, $\{0.5, 0.5\}$, $\{0.5, -0.5\}$, $\{-0.5, 0.5\}$, respectively.
The communication channel experiences $P=1$ NLoS path and the Rician factor $K_{\mathrm{f}}$ is $30$ dB. 
The channel coefficients are generated by $\alpha_{u,p} = \frac{\lambda_{c}}{4 \pi d_{u} } e^{\mathrm{j} \vartheta_{u,p}}$, where $d_{u} \sim \mathcal{U}[20, 45]$ denotes the distance between the BS and the UE, and $\vartheta_{u,p}$ is the corresponding phase drawn from $\mathcal{U}[0, 2\pi)$. 
For the target parameters, the azimuth and elevation angles $\theta_{q}$ and $\phi_{q}$ are randomly generated from $[0, 2\pi]$ and $[0, \pi/3]$, respectively, while the range and velocity are uniformly distributed in $[R_{\mathrm{min}}, R_{\mathrm{max}}]$ and  $[-V_{\mathrm{max}}, V_{\mathrm{max}}]$, respectively, with $R_{\mathrm{min}} = 20$ m, 
$R_{\mathrm{max}} = 45$ m, 
and the maximum velocity $V_{\mathrm{max}} = 30$ m/s. 
The reflection coefficient of each target is generated by 
$\beta_{q} = \sqrt{\frac{\sigma_{q}  \lambda_{c}^2}{ (4 \pi)^{3} R_{q}^{4} }} e^{\mathrm{j} \vartheta_{q}}$ \cite{Zhang2024IOTJChannel},
where $\sigma_{q}$ is the radar cross section and is set to $1$ m$^2$, 
$\vartheta_{q}$ denotes a random phase drawn from $\mathcal{U}[0, 2\pi)$. The main simulation parameters used throughout this section are summarized in Table~\ref{tab:simulation_parameters}.
The root mean square error (RMSE) is adopted to evaluate the target parameter estimation performance 
\begin{equation}
\mathrm{RMSE}(\chi) = \sqrt{\frac{1}{Q} \sum_{q=1}^{Q} \big(\hat{\chi}_{q} - \chi_{q} \big)^2 }, \label{eq:rmse}
\end{equation}
where $\hat{\chi}_{q}$ and $\chi_{q}$ denote the estimated and corresponding true parameters, respectively, and $\chi_{q} \in \lbrace \theta_{q}, \phi_{q}, \tau_{q}, \omega_{q} \rbrace$.
In terms of benchmarks for performance comparison, we employ the conventional ordinary CPD-based method \cite{Zhang2024TWC}, the typical multiple signal classification (MUSIC) method \cite{Liu2020Joint}, and the OMP-based multi-target sensing method \cite{ercan2025ris}.

\begin{table}[t]
\centering
\caption{Main Simulation Parameters}
\label{tab:simulation_parameters}
\setlength{\tabcolsep}{3pt}
\renewcommand{\arraystretch}{1.15}
\begin{tabular}{|
>{\centering\arraybackslash}p{0.50\columnwidth}|
>{\centering\arraybackslash}p{0.12\columnwidth}|
>{\centering\arraybackslash}p{0.25\columnwidth}|}
\hline
\textbf{Parameter} & \textbf{Symbol} & \textbf{Value} \\
\hline
Number of BS transmit antennas & $N_{\mathrm{bs}}$ & $256$ \\
\hline
Number of BS receive antennas  & $N_{\mathrm{re}}$ & $256$ \\
\hline
Number of transmit RF chains & $M_{\mathrm{bs}}$ & $4$ \\
\hline
Number of receive RF chains & $M_{\mathrm{re}}$ & $16$ \\
\hline
Carrier frequency & $f_c$ & $28\,\mathrm{GHz}$ \\
\hline
Bandwidth & $B$ & $100\,\mathrm{MHz}$ \\
\hline
Number of subcarriers & $K_0$ & $128$ \\
\hline
Subcarrier spacing & $\Delta f$ & $B/K_0$ \\
\hline
CP duration & $T_{\mathrm{cp}}$ & $1/(2\Delta f)$ \\
\hline
Number of UEs & $U$ & $4$ \\
\hline
Number of targets & $Q$ & $3$ \\
\hline
Number of NLoS paths & $P$ & $1$ \\
\hline
Rician factor & $K_{\mathrm{f}}$ & $30\,\mathrm{dB}$ \\
\hline
Distance between the BS and UEs & $d_u$ & $\mathcal{U}[20, 45]\,\mathrm{m}$ \\
\hline
Distance between the BS and targets & $R_q$ & $\mathcal{U}[20,45]\,\mathrm{m}$ \\
\hline
Velocity of targets & $V_q$ & $\mathcal{U}[-30,30]\,\mathrm{m/s}$ \\
\hline
Azimuth angle of targets  & $\theta_q$ & $\mathcal{U}[0,2\pi]$ rad\\
\hline
Elevation angle of targets  & $\phi_q$ & $\mathcal{U}[0,\pi/3]$ rad \\
\hline
Radar cross section of targets & $\sigma_q$ & $1\,\mathrm{m}^2$ \\
\hline
\end{tabular}
\end{table}

\begin{figure}[!tp]
	\centering
	\includegraphics[width=3.2in]{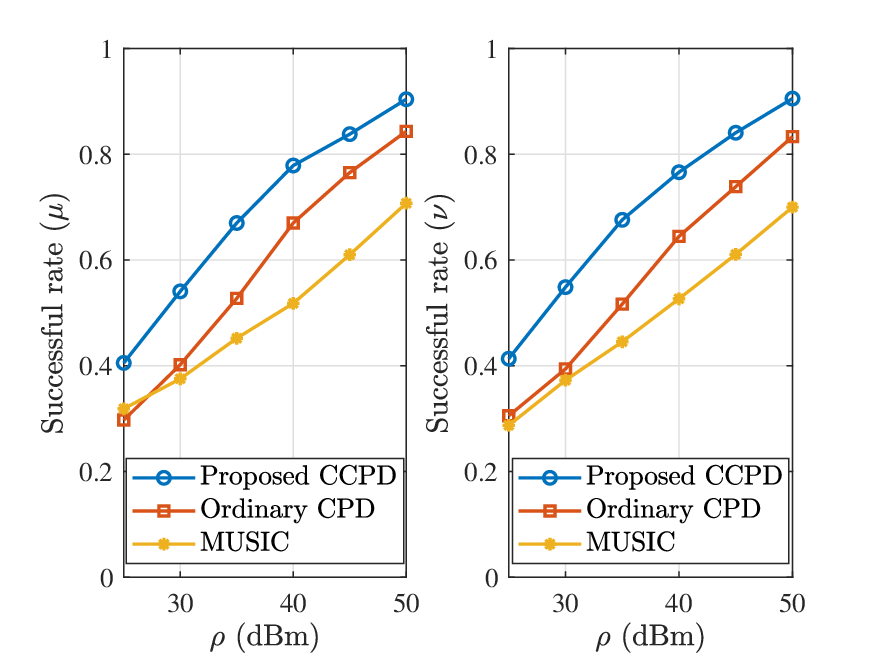}
	\caption{The success rate of correctly estimating $\mu$ and $\nu$, where $K=8$ and $N=16$.}
	\label{Fig_success_rho}
\end{figure}

\begin{figure}[!tp]
	\centering
	\includegraphics[width=3.2in]{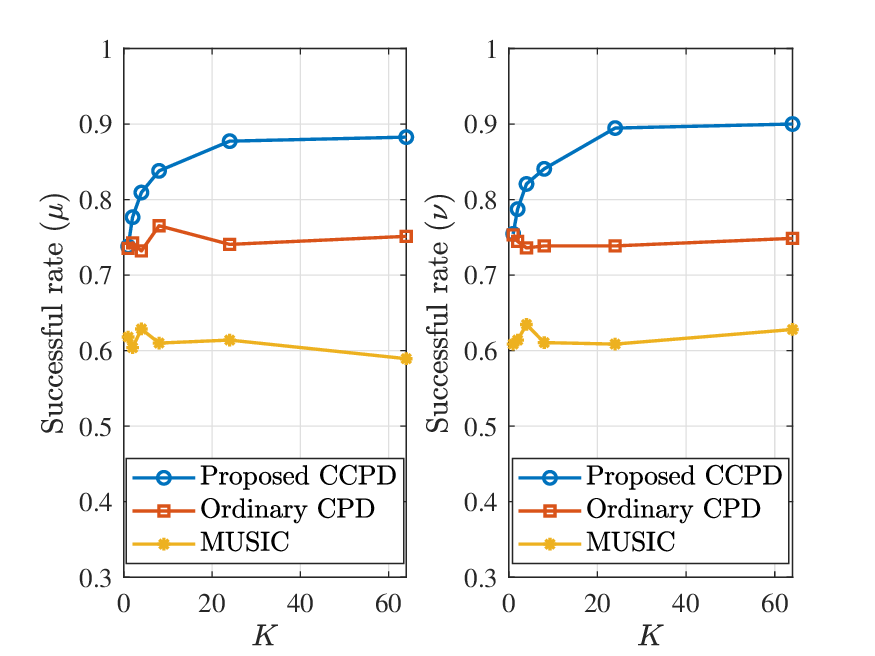}
	\caption{The success rate of correctly estimating $\mu$ and $\nu$ versus the number of subcarriers $K$, where $\rho = 45$ dBm and $N = 16$.}
	\label{Fig_success_K}
\end{figure}

\begin{figure}[!tp]
	\centering
	\includegraphics[width=3.2in]{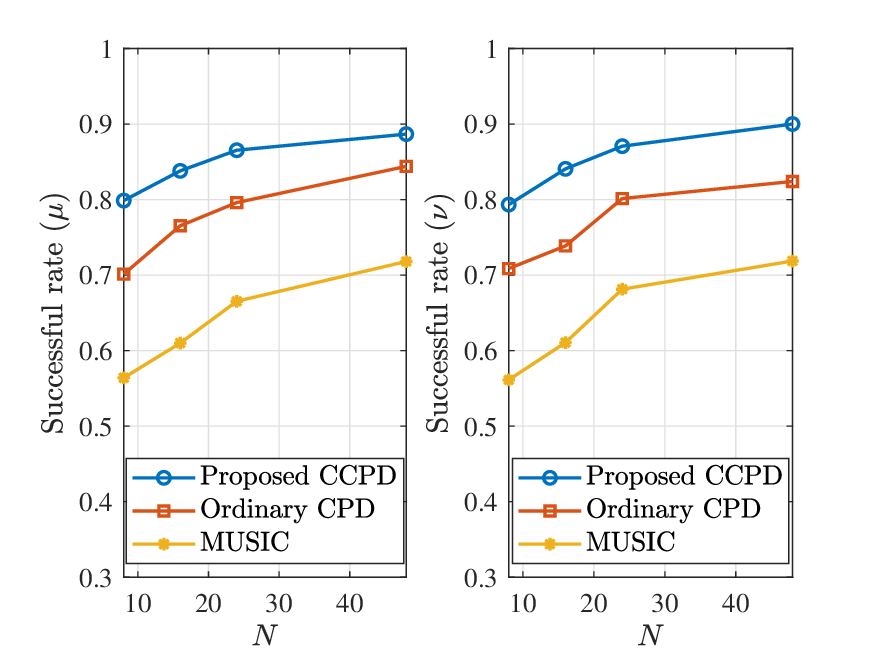}
	\caption{The success rate of correctly estimating $\mu$ and $\nu$ versus the number of OFDM symbols $N$, where $\rho = 45$ dBm and $K = 8$.}
	\label{Fig_success_N}
\end{figure}

Fig. \ref{Fig_success_rho} depicts the success rate of correctly estimating $\mu$ and $\nu$. 
A trial is considered successful if the estimation error satisfies $|\hat{\mu}_{q} - \mu_{q} | \leq \frac{1}{2 N_{\mathrm{bs},x}} $ or $|\hat{\nu}_{q} - \nu_{q} | \leq \frac{1}{2 N_{\mathrm{bs},y}} $, respectively \cite{Mamandipoor2016Newtonized}.
The simulation parameters are set to $K=8$ and $N=16$. 
From Fig. \ref{Fig_success_rho}, it can be observed that the success rate of the proposed CCPD method is considerably higher than that of the benchmark methods, and the proposed CCPD method performs similarly when estimating $\mu$ and $\nu$.
The MUSIC method achieves no more than an $80\%$ success rate even with a high $\rho = 50$ dBm, while the proposed CCPD method achieves a $90\%$ success rate.
Moreover, the proposed CCPD method exhibits a higher success rate compared with the ordinary CPD method, especially in the low-$\rho$ regime.
The significant enhancement of the success rate is attributed to the exploitation of the coupled CPD structure of factor matrices in \eqref{equ:CCPD}-\eqref{equ:C_est}, which verifies the theoretical analysis in Lemma \ref{lemma:coupled CPD}.

Fig. \ref{Fig_success_K} depicts the success rate of correctly estimating $\mu$ and $\nu$ versus the number of subcarriers $K$. 
The simulation parameters are set to $\rho = 45$ dBm and $N = 16$. 
From Fig. \ref{Fig_success_K}, the proposed CCPD method demonstrates a substantially higher success rate than the existing benchmark methods, and it achieves a higher success rate for $\nu$ than for $\mu$.
The MUSIC method and the ordinary CPD method achieve success rates of no more than $65\%$ and $80\%$, respectively, while the proposed CCPD method achieves a success rate approaching $90\%$. 
The success rate of the proposed CCPD method consistently improves as the number of subcarriers increases, demonstrating that exploiting the coupled structure across multiple subcarriers effectively enhances the robustness and resolution of parameter estimation.

Fig. \ref{Fig_success_N} depicts the success rate of correctly estimating $\mu$ and $\nu$ versus the number of OFDM symbols $N$. 
The simulation parameters are set to $\rho=45$ dBm and $K=8$. 
From Fig. \ref{Fig_success_N}, it can be observed that the proposed CCPD method maintains a higher success rate than the benchmark methods across different numbers of symbols, and it achieves a higher success rate for $\nu$ than for $\mu$, especially in the high-$N$ regime.
The ordinary CPD method exhibits a noticeable decline in the success rate when the number of symbols is lower than $24$, and the success rate of the MUSIC method drops sharply as the number of symbols decreases, while the success rate of the proposed CCPD method remains stable and consistently exceeds $80\%$. 
This shows that coupled CPD can effectively combine time-domain and frequency-domain information and improve the efficiency of time-resource utilization.

\begin{figure}[!tp]
	\centering
	\includegraphics[width=3.2in]{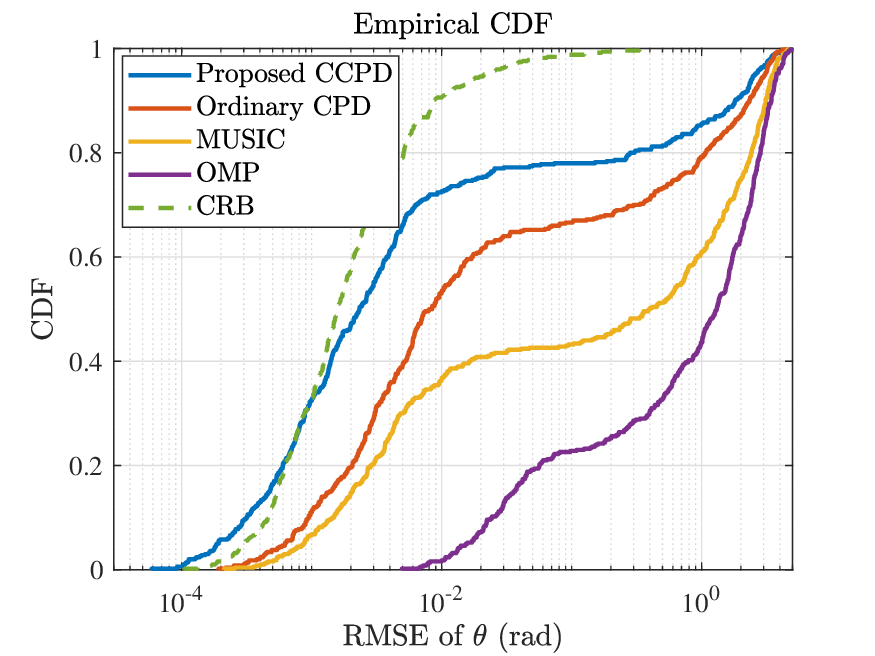}
	\caption{The CDF of RMSE performance with respect to azimuth angles $\theta$, where $\rho = 45$ dBm, $K = 8$, and $N = 32$.}
	\label{Fig_CDF_theta}
\end{figure}

\begin{figure}[!tp]
	\centering
	\includegraphics[width=3.2in]{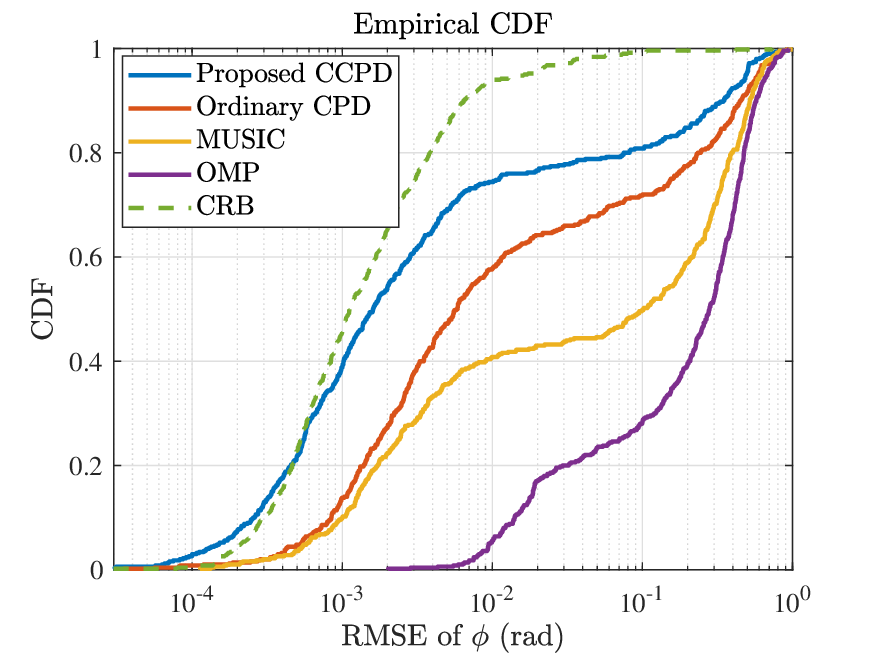}
	\caption{The CDF of RMSE performance with respect to elevation angles $\phi$, where $\rho = 45$ dBm, $K = 8$, and $N = 32$.}
	\label{Fig_CDF_phi}
\end{figure}

\begin{figure}[!tp]
	\centering
	\includegraphics[width=3.2in]{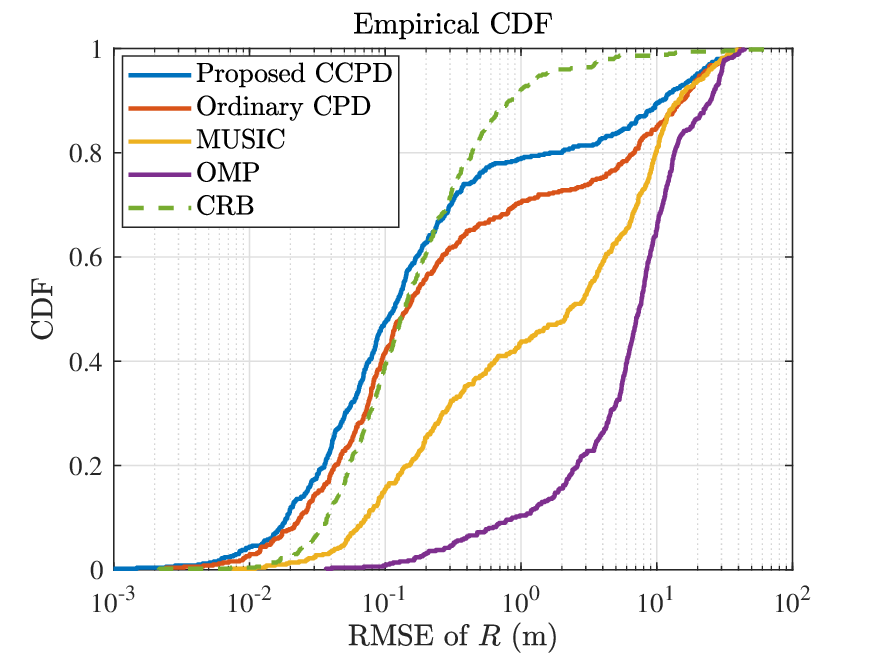}
	\caption{The CDF of RMSE performance with respect to range $R$, where $\rho = 45$ dBm, $K = 8$, and $N = 32$.}
	\label{Fig_CDF_R}
\end{figure}

\begin{figure}[!tp]
	\centering
	\includegraphics[width=3.2in]{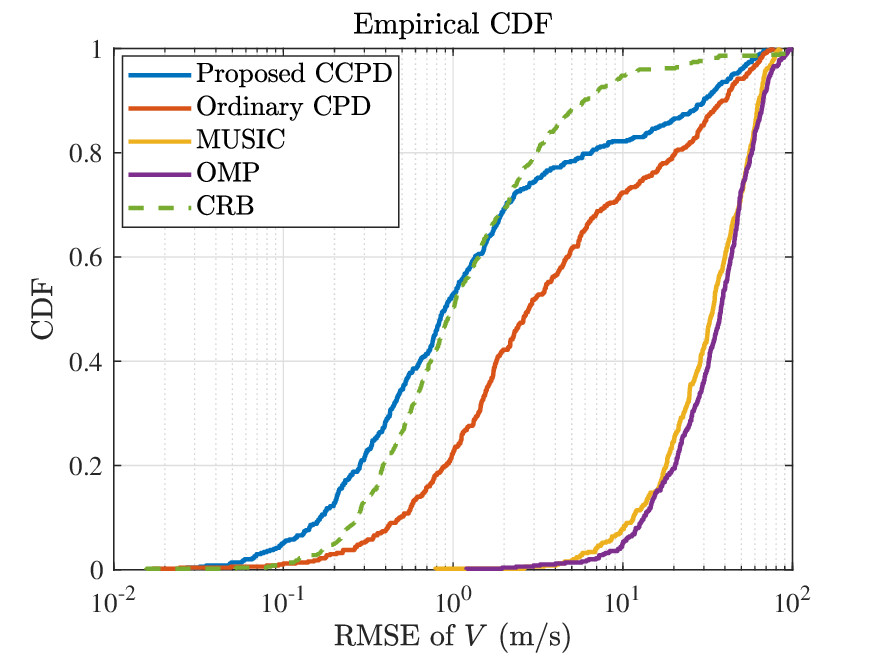}
	\caption{The CDF of RMSE performance with respect to velocity $V$, where $\rho = 45$ dBm, $K = 8$, and $N = 32$.}
	\label{Fig_CDF_V}
\end{figure}

Fig. \ref{Fig_CDF_theta} depicts the cumulative distribution function (CDF) of RMSE performance with respect to azimuth angles $\theta$, where $\rho = 45$ dBm, $K = 8$, and $N = 32$. 
For the CDF results, the empirical distributions are generated through the following Monte Carlo procedure. In each realization, the UE and target parameters are randomly generated according to Table~\ref{tab:simulation_parameters}. 
The proposed CCPD and benchmark methods are applied to parameter estimation based on the considered FD MIMO-OFDM ISAC model, and the RMSE is computed for each realization. 
After $N_{\rm MC}=500$ independent realizations, the empirical CDF is obtained by
$\mathrm{CDF}(x) = \frac{1}{N_{\mathrm{MC}}}
\sum_{n_{\mathrm{MC}}=1}^{N_{\mathrm{MC}}}
\mathbb{I}\left(\mathrm{RMSE}_{n_{\mathrm{MC}}} \le x\right)$, where $\mathbb{I}(\cdot)$ denotes the indicator function, i.e., $\mathbb{I}(A)=1$ if $A$ is true and $\mathbb{I}(A)=0$ otherwise.
For each Monte Carlo realization, the CRB is computed for the corresponding randomly generated target parameters, yielding $N_{\mathrm{MC}}$ CRB results for constructing the empirical CDF curve.
From Fig. \ref{Fig_CDF_theta}, it can be observed that the azimuth-angle estimation accuracy of the proposed CCPD method is higher than that of the ordinary CPD method, the MUSIC method, and the OMP method.
The results show that the coupled CPD framework can effectively suppress the parameter estimation error and improve the accuracy and consistency of angle estimation.
Fig. \ref{Fig_CDF_phi} depicts the CDF of RMSE performance with respect to elevation angles $\phi$ under the same simulation setting. The proposed CCPD method similarly exhibits a superior error distribution, further confirming the overall advantage of coupled CPD in two-dimensional angle estimation.

Fig. \ref{Fig_CDF_R} depicts the CDF of the range estimation RMSE under $\rho = 45$ dBm, $K = 8$, and $N = 32$.
It can be observed that the proposed CCPD method yields a more concentrated error distribution in the low-RMSE region than the benchmark methods.
In particular, its CDF increases rapidly when the RMSE of $R$ is below $10^{-1}$ m and remains closer to the CRB than the ordinary CPD, MUSIC, and OMP methods over most RMSE thresholds.
In contrast, the ordinary CPD method exhibits a heavier high-RMSE tail because it only exploits the echo signal tensor of a single subcarrier. 
Also, the MUSIC and OMP methods suffer from more pronounced error spreads in the medium-to-high RMSE region.
This performance gain demonstrates the improvement in accuracy and robustness of the proposed CCPD method in terms of range estimation.

Fig. \ref{Fig_CDF_V} depicts the CDF of the velocity estimation RMSE under $\rho = 45$ dBm, $K = 8$, and $N = 32$.
Compared with the benchmark methods, the proposed CCPD method achieves the most left-shifted CDF curve and remains closest to the CRB in the low-RMSE region. 
For a given CDF level, the RMSE achieved by the proposed method is clearly smaller than that achieved by the ordinary CPD, MUSIC, and OMP methods, indicating a higher probability of obtaining accurate velocity estimates.
The advantage over the ordinary CPD method comes from the joint exploitation of the common Doppler structure across subcarriers, whereas the MUSIC and OMP methods show a much slower CDF increase and heavier high-RMSE tails.
These observations verify that the proposed coupled tensor formulation improves velocity estimation performance from the communication data payload signals.

\begin{figure}[!tp]
	\centering
	\includegraphics[width=3.2in]{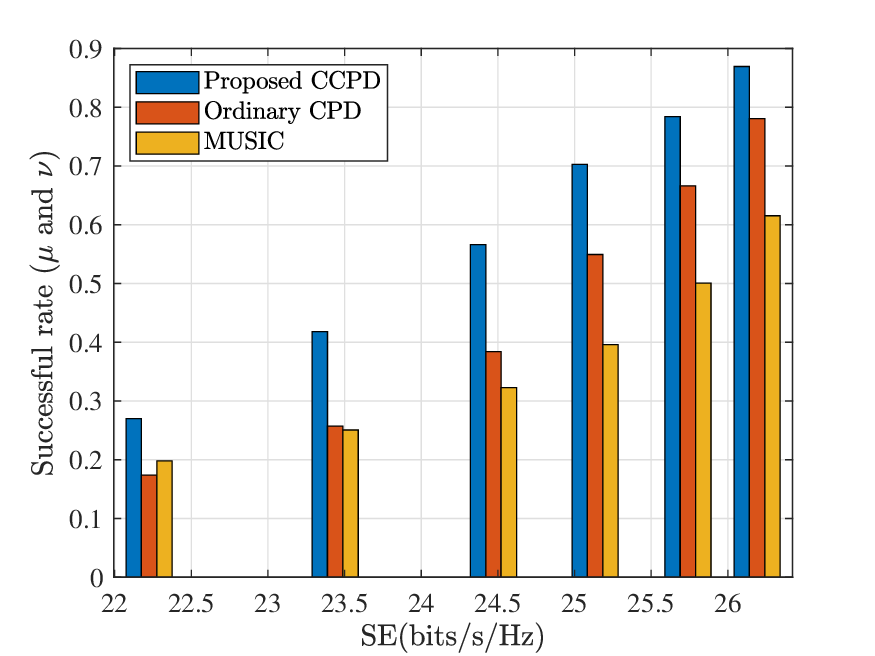}
	\caption{The success rate of correctly detecting targets as a function of SE, where $K=8$ and $N=16$.}
	\label{Fig_success_SE}
\end{figure}

Fig. \ref{Fig_success_SE} depicts the success rate of correctly detecting targets as a function of SE, where $K=8$ and $N=16$. One trial is considered successful if the estimation error satisfies both $|\hat{\mu}_{q} - \mu_{q} | \leq \frac{1}{2 N_{\mathrm{bs},x}} $ and $|\hat{\nu}_{q} - \nu_{q} | \leq \frac{1}{2 N_{\mathrm{bs},y}} $.
It can be observed that the proposed CCPD method maintains a high success rate, remaining above $40\%$ in a medium-SE regime, whereas the ordinary CPD method and the MUSIC method achieve only around a $25\%$ success rate under the same conditions.
Under the same spectral efficiency, the proposed CCPD method still maintains a high success rate, indicating that the proposed method provides more reliable target sensing while preserving communication performance.

To quantitatively evaluate the convergence behavior of the proposed CCPD method, the normalized reconstruction error (NRE) is adopted as the convergence metric. 
At the $i$-th iteration, the reconstructed tensor corresponding to the $k$-th subcarrier is expressed as 
$\widehat{\mathcal{Y}}_{k}^{(i)}
=
\llbracket
\mathbf{B}_{k}^{(i)},
\mathbf{C}_{k}^{(i)},
\mathbf{D}^{(i)}
\rrbracket$, 
where $\mathbf{B}_{k}^{(i)}$, $\mathbf{C}_{k}^{(i)}$, and
$\mathbf{D}^{(i)}$ denote the factor matrices estimated at the $i$-th CCPD iteration. The NRE is then defined as
\begin{align}
{\rm NRE}^{(i)}
    =
    \frac{
    \sum_{k=0}^{K-1}
    \left\|
    \mathcal{Y}_{k}
    -
    \widehat{\mathcal{Y}}_{k}^{(i)}
    \right\|_{F}^{2}
    }
    {
    \sum_{k=0}^{K-1}
    \left\|
    \mathcal{Y}_{k}
    \right\|_{F}^{2}
    },
\end{align}
where the numerator measures the total reconstruction error, while the denominator normalizes the error by the total energy of the received coupled tensors. Consequently, a smaller NRE indicates a more accurate approximation of the observed tensor signals.

\begin{figure}[!tp]
	\centering
	\includegraphics[width=3.2in]{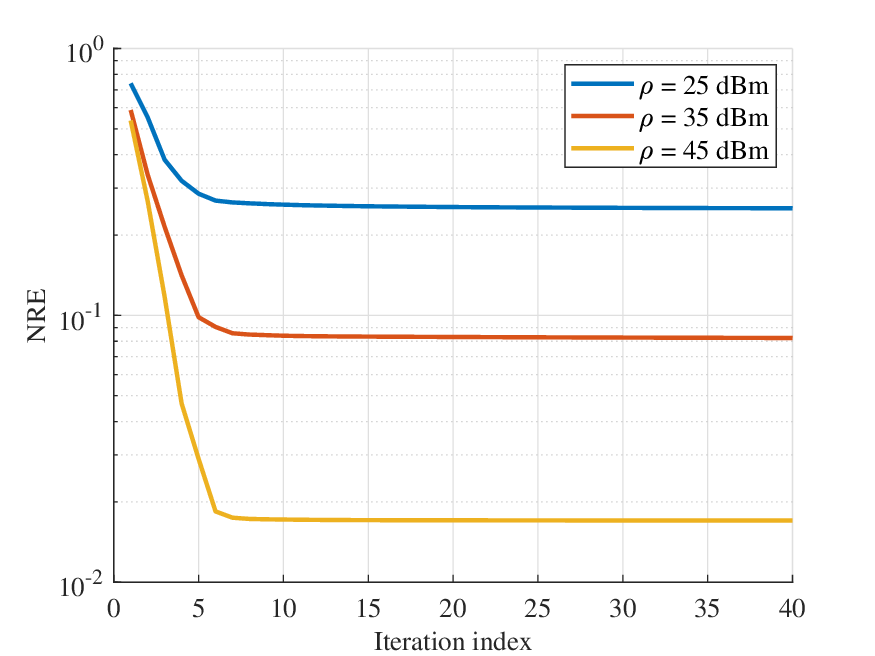}
	\caption{Normalized reconstruction error versus the number of CCPD iterations under different $\rho$.}
    \label{fig:conv_nre}
\end{figure}

Fig. \ref{fig:conv_nre} illustrates the convergence performance of the proposed CCPD method in terms of the NRE under different transmit power values $\rho$, where $K = 8$ and $N = 16$. It can be observed that the NRE decreases rapidly within the first few iterations and then gradually converges to a stable value within approximately 7 to 10 iterations, demonstrating the rapid convergence and numerical robustness of the proposed algorithm. Moreover, increasing the transmit power $\rho$ leads to a lower final NRE. This is because a larger transmit power improves the quality of the received echo signals, which enhances the accuracy of the coupled tensor decomposition and consequently reduces the reconstruction error. 
These results confirm that the proposed CCPD method can achieve effective convergence while maintaining high reconstruction accuracy.

\section{Conclusion}
In this paper, we proposed a novel target sensing method for FD MIMO-OFDM ISAC systems based on coupled tensor decomposition. The core of the proposed approach lies in formulating the target parameter estimation problem as a CCPD problem, which effectively leverages the communication data payload across multiple OFDM subcarriers. 
We proposed a two-stage CCPD-based target sensing algorithm, such that the target parameters, including azimuth and elevation angles, time delay, Doppler shift, and reflection coefficients, are separately extracted from the decomposed factor matrices. 
Theoretical analysis confirms that the proposed CCPD formulation yields a more relaxed uniqueness condition than the ordinary CPD method that only utilizes a single subcarrier. This enables more stable and reliable target parameter acquisition using pure communication data without dedicating additional resources or compromising communication functionality.
Numerical simulations demonstrate the performance superiority of the proposed method in terms of success rate and target-parameter estimation accuracy by exploiting the data payload signals.

\appendices

\section{PROOF OF LEMMA \ref{lemma:solution_c_q_k}}\label{proof_c_q_k}
\begin{proof}
    From \eqref{equ:c_q_k}, we have
    \begin{align}
        \hat{\mathbf{c}}_{q,k} 
        & = \delta_{q,k}^{(2)} \gamma_{q,k} \mathbf{d}_{y}(\nu_{q}) + \mathbf{e}_{q,k}^{(2)} \nonumber\\
        & = \tilde{\delta}_{q,k}^{(2)} \mathbf{d}_{y}(\nu_{q}) + \mathbf{e}_{q,k}^{(2)}, 
    \end{align}
    where $\nu_{q}$, $\gamma_{q,k}$, and $\delta_{q,k}^{(2)}$ are the unknown parameters, and the second equality holds by defining $\tilde{\delta}_{q,k}^{(2)}=\delta _{q,k}^{(2)}\gamma_{q,k}$. 
    Without loss of generality, we assume $\mathbf{e}_{q,k}^{(2)}\sim \mathcal{CN}(0,\sigma_{\mathrm{s}}^{2}\mathbf{I}_{M_{y}})$, and then the log-likelihood function for the unknown parameters is given by 
	\begin{align}
		L(\tilde{\delta}_{q,k}^{(2)},\nu _{q})
        =&\sum_{k=0}^{K-1}\Big(-M_{y}\ln(\pi \sigma _{\mathrm{s}}^{2})\notag\\
		&-\frac{1}{\sigma _{\mathrm{s}}^{2}}\left\| \hat{\mathbf{c}}_{q,k} - \tilde{\delta}_{q,k}^{(2)} \mathbf{d}_{y}(\nu _{q})\right\|_{2}^{2} \Big) \notag \\
		\propto& -\sum_{k=0}^{K-1}\left\| \hat{\mathbf{c}}_{q,k}-\tilde{\delta}_{q,k}^{(2)}\mathbf{d}_{y}(\nu _{q})\right\|_{2}^{2}. \label{equ:L_delta_nu}
	\end{align}

	Given a fixed $\nu_{q}$, by setting the partial derivative of the log-likelihood function with respect to $\tilde{\delta}_{q,k}^{(2)}$ to zero, i.e., $\frac{\partial L(\tilde{\delta}_{q,k}^{(2)},\nu _{q})}{\partial \tilde{\delta}_{q,k}^{(2)}}=0$, we have 
	\begin{align}
		\mathbf{d}_{y}^{H}(\nu_{q})(\hat{\mathbf{c}}_{q,k}-\tilde{\delta}_{q,k}^{(2)}\mathbf{d}_{y}(\nu_{q}))=0.
	\end{align}

	Thus, we can obtain the optimal solution of $\tilde{\delta}_{q,k}^{(2)}$ as follows 
	\begin{align}
		\hat{\tilde{\delta}}_{q,k}^{(2)}=\frac{\mathbf{d}_{y}^{H}(\nu _{q})\hat{\mathbf{c}}_{q,k}}{\left\| \mathbf{d}_{y}(\nu _{q})\right\|_{2}^{2}}. \label{equ:delta_q_k_'}
	\end{align}

	Substituting the optimal $\hat{\tilde{\delta}}_{q,k}^{(2)}$ from \eqref{equ:delta_q_k_'} into the log-likelihood function in \eqref{equ:L_delta_nu}, we have 
	\begin{align}
		L(\nu _{q})&\propto -\sum_{k=0}^{K-1}(\left\| \hat{\mathbf{c}}_{q,k}\right\|_{2}^{2}-\frac{\left | \mathbf{d}_{y}^{H}(\nu _{q})\hat{\mathbf{c}}_{q,k}\right |^{2}}{\left\| \mathbf{d}_{y}(\nu _{q})\right\|_{2}^{2}}) \notag \\
        & \propto -\sum_{k=0}^{K-1}\left\| \hat{\mathbf{c}}_{q,k}\right\|_{2}^{2}(1-\frac{\left | \mathbf{d}_{y}^{H}(\nu _{q})\hat{\mathbf{c}}_{q,k}\right |^{2}}{\left\| \mathbf{d}_{y}(\nu _{q})\right\|_{2}^{2}\left\| \hat{\mathbf{c}}_{q,k}\right\|_{2}^{2}})\notag \\
		&\propto -\sum_{k=0}^{K-1}(1-\frac{\left | \mathbf{d}_{y}^{H}(\nu _{q})\hat{\mathbf{c}}_{q,k}\right |^{2}}{\left\| \mathbf{d}_{y}(\nu _{q})\right\|_{2}^{2}\left\| \hat{\mathbf{c}}_{q,k}\right\|_{2}^{2}}).
	\end{align}

	Accordingly, the maximum likelihood solution of $\nu _{q}$ can be found by
	\begin{align}
		\hat{\nu}_{q} 
        = & \arg\max_{\nu _{q}}L(\nu _{q}) \notag \\
        = & \arg\max_{\nu _{q}}\sum_{k=0}^{K-1}\frac{\left | \mathbf{d}_{y}^{H}(\nu _{q})\hat{\mathbf{c}}_{q,k}\right |^{2}}{\left\| \mathbf{d}_{y}(\nu _{q})\right\|_{2}^{2}\left\| \hat{\mathbf{c}}_{q,k}\right\|_{2}^{2}}.
	\end{align}

    The proof is completed.
\end{proof}
    
\section{PROOF OF LEMMA \ref{lemma:solution_b_q_k}} \label{proof_b_q_k}
\begin{proof}
	From \eqref{equ:Sigma_a_theta}, we have
	\begin{align}
		\hat{\mathbf{b}}_{q,k} = \delta_{q,k}^{(1)} \mathbf{Q}_{k}^{T} \mathbf{a}_{\mathrm{do}}(\omega_{q})  + \mathbf{e}_{q}^{(1)},
	\end{align}
	where $\delta _{q,k}^{(1)}$ and $\omega _{q}$ are unknown parameters.
Similar to \eqref{equ:L_delta_nu}, the log-likelihood function of $\hat{\mathbf{b}}_{q,k}$ is given by 
	\begin{align}
		L(\delta _{q,k}^{(1)},\omega _{q})
		\propto& -\sum_{k=0}^{K-1}\left\| \hat{\mathbf{b}}_{q,k}-\delta _{q,k}^{(1)}\mathbf{Q}_{k}^{T}\mathbf{a}_{\mathrm{do}}(\omega _{q})\right\|_{2}^{2}. \label{equ:L_delta_omega}
	\end{align}

	Given a fixed $\omega _{q}$,
	by setting the partial derivative of the log-likelihood function with respect to $\delta _{q,k}^{(1)}$ to zero,
	i.e., $\frac{\partial L(\delta _{q,k}^{(1)},\omega _{q})}{\partial \delta _{q,k}^{(1)}}=0$,
	we have
	\begin{align}
		(\mathbf{Q}_{k}^{T}\mathbf{a}_{\mathrm{do}}(\omega_{q}))^{H}(\hat{\mathbf{b}}_{q,k}-\delta _{q,k}^{(1)}\mathbf{Q}_{k}^{T}\mathbf{a}_{\mathrm{do}}(\omega_{q}))=0. 
	\end{align}

	Thus, we can obtain the optimal $\delta _{q,k}^{(1)}$ as follows 
	\begin{align}
		\hat{\delta}_{q,k}^{(1)}=\frac{(\mathbf{Q}_{k}^{T}\mathbf{a}_{\mathrm{do}}(\omega_{q}))^{H}\hat{\mathbf{b}}_{q,k}}{\left\| \mathbf{Q}_{k}^{T}\mathbf{a}_{\mathrm{do}}(\omega_{q})\right\|_{2}^{2}} \label{equ:delta_q_k}.
	\end{align}

	Substituting the optimal $\delta _{q,k}^{(1)}$ from \eqref{equ:delta_q_k} into the log-likelihood function in \eqref{equ:L_delta_omega}, we have 
	\begin{align}
		L(\omega_{q})
		&\propto -\sum_{k=0}^{K-1}(1-\frac{\left | (\mathbf{Q}_{k}^{T}\mathbf{a}_{\mathrm{do}}(\omega_{q}))^{H}\hat{\mathbf{b}}_{q,k}\right |^{2}}{\left\| \mathbf{Q}_{k}^{T}\mathbf{a}_{\mathrm{do}}(\omega_{q})\right\|_{2}^{2}\left\| \hat{\mathbf{b}}_{q,k}\right\|_{2}^{2}}).
	\end{align}

	Thus, the maximum likelihood solution of $\omega_{q}$ can be found by
	\begin{align}
		\hat{\omega}_{q} 
		&= \arg\max_{\omega _{q}}\sum_{k=0}^{K-1}\frac{\left | (\mathbf{Q}_{k}^{T}\mathbf{a}_{\mathrm{do}}(\omega_{q}))^{H}\hat{\mathbf{b}}_{q,k}\right |^{2}}{\left\| \mathbf{Q}_{k}^{T}\mathbf{a}_{\mathrm{do}}(\omega_{q})\right\|_{2}^{2}\left\| \hat{\mathbf{b}}_{q,k}\right\|_{2}^{2}}.
	\end{align}

	The proof is completed.
\end{proof}



%

\bibliographystyle{IEEEtran}

\end{document}